\newcommand{\C}{\mathscr C}
\newcommand{\F}{\mathscr F}
\def\e{{\varepsilon}}
\newcommand{\B}{\mathcal{B}}
\newcommand{\Z}{\mathbb{Z}}
\newcommand{\A}{ \mathscr{A}}
\newcommand{\Pb}{ \mathbb{P}}
\newcommand{\w}{\omega}
\newcommand{\pk}{\mathsf{pk}}
\newcommand{\sk}{\mathsf{sk}}
\newcommand{\rec}{\mathsf{rec}}
\DeclareMathOperator{\lcm}{lcm}
\DeclareMathOperator{\MtA}{MtA}
\DeclareMathOperator{\MtAwc}{MtAwc}
\DeclareMathOperator{\Com}{Com}
\DeclareMathOperator{\Ver}{Ver}
\theoremstyle{plain} 
\newtheorem{teo}{Theorem}[section]
\newtheorem{lem}{Lemma}[section]
\theoremstyle{definition} 
\newtheorem{defi}{Definition}[section]
\newtheorem{oss}{Observation}
\begin{document}

  %
  %
  %
  %
  %
  %
  %
  %
  %
\title{Threshold ECDSA with an Offline Recovery Party}

\author{Michele Battagliola\thanks{michele.battagliola@unitn.it} \and Riccardo Longo\thanks{riccardolongomath@gmail.com} \and Alessio Meneghetti\thanks{alessio.meneghetti@unitn.it} \and Massimiliano Sala\thanks{maxsalacodes@gmail.com}}
   
%
%
%
%
%

%
  

\date{}
\maketitle

\centerline{ Department of Mathematics, University Of Trento}
   \centerline{38123 Povo, Trento, Italy}

\begin{abstract}
    A $(t,n)-$ threshold signature scheme enables distributed signing among $n$ players such that any subset of size at least $t$ can sign, whereas any subset with fewer players cannot.
    Our goal is to produce digital signatures that are compatible with an existing centralized signature scheme: the key-generation and signature algorithms are replaced by a communication protocol between the players, but the verification algorithm remains identical to that of a signature issued using the centralized algorithm. 
    Starting from the threshold scheme for the ECDSA signature due to R. Gennaro and S. Goldfeder, we present the first protocol that supports multiparty signatures with an offline participant during the key-generation phase and that does not rely on a trusted third party.
    \\
    Under standard assumptions on the underlying algebraic and geometric problems (e.g. the Discrete Logarithm Problem for an elliptic curve and the computation of $e^{th}$ root on semi-prime residue rings), we prove our scheme secure against adaptive malicious adversaries.
\end{abstract}
\section{Introduction}
A $(t, n)$-threshold signature scheme enables distributed signing among $n$ players such that any subset of size $t$ can sign, whereas any subset with fewer players cannot.
The first Threshold Multi-Party Signature Scheme  was a protocol for ECDSA signatures proposed by Gennaro et al.~\cite{gennaro1996robust} where $t+1$ parties out of $2t + 1$ were required to sign a message.
Later MacKenzie and Reiter proposed and then improved another scheme~\cite{mackenzie2001two,mackenzie2004two}, which has later been furthermore enhanced~\cite{doerner2018secure,doerner2019threshold,lindell2017fast}.
The first scheme supporting a general $(t, n)$-threshold was proposed in~\cite{gennaro2016threshold}, improved in~\cite{boneh2017using} and in~\cite{gennaro2018fast}.
A parallel approach has been taken by Lindell and Nof in~\cite{lindell2018fast}.
In~\cite{kondi2019refresh} the authors introduce a refresh mechanism, for proactive security against the corruption of different actors in time, that does not require all parties to be online, and in~\cite{canetti2020uc} the authors take a similar approach and propose a protocol that streamlines signature generation and include proactive security mechanisms.
Currently there is a large effort of standardization for threshold signatures, as can be seen in \cite{nist_stand}.
\\
The schemes proposed in the previous papers produce signatures that are compatible with an existing centralized signature scheme.
In this context, the key-generation and signature algorithms are replaced by a communication protocol between the parties, while the verification algorithm remains identical to that of a signature issued using the centralized algorithm.

The need for joint signatures arises frequently in the world of cryptocurrencies, where digital signatures determine ownership rights and control over assets, meaning that protection and custody of private keys is of paramount importance.
A particularly sensitive issue is the resiliency against key loss, since there is no central authority that can restore ownership of a digital token once the private key of the 
wallet is lost. Three possible solutions are:
\begin{itemize}
    \item to rely on a trusted third-party custodian that takes responsibility of key management, but this kind of centralization may form single points of failure and juicy targets for criminal takeovers (there have already been plenty of examples of said events in the past~\cite{chohan2018problems}).
    \item to use multi-sig wallets (available for some cryptocurrencies, like Bitcoin~\cite{nakamoto2019bitcoin}) where the signatures are normal ones, but funds may be moved out of that wallet only with a sufficient number of signatures corresponding to a prescribed set of public keys. Unfortunately, this approach is not supported by every cryptocurrency (e.g. Ethereum~\cite{ethereum}) and such wallets are very easily identifiable. 
    \item to distribute the control of the wallet through advanced multi-signature schemes, in particular with threshold-like policies. This solution may be reached by threshold multi-sig compatible with centralised digital signatures, as in the aforementioned work by Gennaro and Goldfeder and related papers.
\end{itemize}
Regarding this last option, there is a potential problem in real-life applications of these protocols: the recovery party (that allows to recover the wallet funds in case of key loss) is usually not willing to sustain the cost of frequent online collaboration.
For example, a bank may safely guard a \emph{piece of the secret key}, but it is inconvenient and quite costly to make the bank participate in the enrollment of every user.

In this paper, following the latter approach, we propose a protocol in which the recovery party is involved only once (in a preliminary set-up), and afterwards it is not involved until a lost account must be recovered. Our protocol may be seen as an adaption of that in \cite{gennaro2018fast}. We prove its security against (adaptive) adversaries by relying on standard assumptions on the underlying algebraic and geometric problems, such as the strong RSA assumption on semi-prime residue rings and the DDH assumption on elliptic curves.

\paragraph{Organization}
We present some preliminaries in \Cref{preliminaries}, we describe our protocol in \Cref{protocol}.
Another practical problem is to derive many keys from a single secret, for example to 
efficiently manage multiple wallets.
Therefore, we provide also an extension of our protocol that can work with key-derivation in \Cref{key-derivation}.
In \Cref{ecdsasecurity} we state the security property that we claim for our protocol, with a proof similar to that in ~\cite{gennaro2018fast}, which however requires several subtle modifications to tackle the off-line situation.
The adaption of our proof to the key-derivation extension is cumbersome but easy and so we do not provide it here.
Finally in \Cref{conclusions} we draw our conclusions.

\section{Preliminaries}\label{preliminaries}
In this section we present some preliminary definitions and primitives that will be used in the protocol and its proof of security.

In the following when we say that an algorithm is \emph{efficient} we mean that it runs in (expected) polynomial time in the size of the input, possibly using a random source.

\subsection{Assumptions}
Our proof is based on two assumptions: the Decisional Diffie-Hellman~\cite{boneh1998decision} (from now on DDH) and the RSA~\cite{rivest1978method} assumption.

\begin{defi}[DDH Assumption]
Let $\mathbb{G}$ be a cyclic group with generator $g$ and order $n$. Let $a, b, c$ be random elements of $\Z_n$. The Decisional Diffie-Hellman assumption, from now on DDH assumption, states that no efficient algorithm can distinguish between the two distributions $(g, g^a, g^b, g^{ab})$ and $(g, g^a, g^b, g^c)$.
\end{defi}
\begin{defi}[RSA Assumption]
Let $N=pq$ with both $p,q$ safe primes. Let $e$ be an integer such
that $e$ and $\phi(N)$ are coprime.
\begin{itemize}
 \item The \emph{RSA assumption} states that given a random element $s \in \Z_N^*$ no efficient algorithm can find $x$ such that $x^e=s \mod N$.
 \item The \emph{Strong RSA assumption} states that given a random element ${s \in \Z_N^*}$ no efficient algorithm can find $x, e_0 \ne 1$ such that $x^{e_0}=s \mod N$.
\end{itemize}
\end{defi}

\subsection{Zero-Knowledge Proofs}
In the protocol various Zero-Knowledge Proofs (ZKP)~\cite{goldreich1986proofs} are used to enforce the respect of the passages prescribed by the specifications.
In fact in the proof of security we can exploit the soundness of these sub-protocols to extract valuable information from the adversary, and their zero-knowledge property to simulate correct executions even without knowing some secrets.
We can do so because we see the adversary as a (black-box) algorithm that we can call on arbitrary input, and crucially we have the faculty to rewind its execution.

In particular we use ZKP \emph{of Knowledge} (ZKPoK) to guarantee the usage of secret values that properly correspond to the public counterpart: the Schnorr protocol for discrete logarithms (see~\cite{schnorr1989efficient} and~\Cref{Schnorr}) and an Integer-Factorization Proof (see~\cite{integer} and~\Cref{fact}) for Paillier keys.
The soundness property of a ZKPoK guarantees that the adversary must know the secret input, and opportune rewinds and manipulations of the adversary's execution during the proof allows us to extract those secrets and use them in the simulation.
Conversely exploiting the zero-knowledge property we can trick the adversary in believing that we know our secrets even if we don't, thus we still obtain a correct simulation of our protocol form the adversary's point of view.

The other ZKP used is a range proof (see~\cite{gennaro2018fast,mackenzie2001two} and~\cref{range}) that guarantees a proper execution of the share conversion protocol of~\Cref{mta}, however the same discussion of~\cite{gennaro2018fast}, Section 5 (about how the security is not significatively affected removing this proof) applies to our protocol.

\subsection{Paillier Cryptosystem}\label{paillier}
In our protocol we use the Paillier cryptosystem, a partially homomorphic asymmetric encryption scheme presented by P. Paillier in \cite{paillier}.
Suppose that Bob wants to send an encrypted message to Alice. The workflow of the algorithm is the following:
\begin{itemize}
\item \textbf{Key-Generation:}
	\begin{enumerate}
		\item Alice chooses two large primes $p,q$ uniformly at random, such that $\gcd(p q,(p-1)(q-1))= 1$.
		\item Alice sets $N=p q$ and $\lambda = \lcm(p-1,q-1)$. 
		\item Alice picks $\Gamma \in \Z^*_{N^2}$ uniformly at random. In this context $\Z^*_{N^2}$ indicates the ring of units of $\Z_{N^2}$.
		\item Alice checks that $N$ divides the order of $\Gamma$. To do so it is sufficient (see~\cite{paillier}, Section 3) to compute $\mu = (L(g^\lambda \mod N^2))^{-1} \mod N$, where $L(x)$ is the quotient of the Euclidean division $\frac{x-1}{N}$, i.e. the largest integer value $k$ such that $x-1 \ge k N$.
		\item The public encryption key is $(\Gamma,N)$. The private encryption key is $(\mu,\lambda)$.
	\end{enumerate}
	\item \textbf{Message Encryption:}
	\begin{enumerate}
		\item To send a message $m \in \Z_N$ to Alice Bob picks $r \in \Z_N^*$.
		\item Bob computes $c=\Gamma^m r^N \mod N^2$.
	\end{enumerate}
	\item \textbf{Decryption}
	\begin{enumerate}
		\item Alice computes $m = L(c^\lambda \mod N^2)\cdot \mu \mod N$.
	\end{enumerate}
	\end{itemize}
Let $E$ and $D$ be the encryption and decryption functions respectively. Given two plaintexts $m_1, m_2 \in Z_N$, their associated ciphertexts $c_1, c_2 \in Z_{N^2}^*$ and $a \in Z_N$ then we define $+_E$ and $\times_E$ as follows:
\begin{align*}
 c_1 +_E c_2 \;&=\; c_1c_2 \mod N^2,\\
  a \times_E c_1 &=\; c_1^a \mod N^2.
\end{align*}
Then the homomorphic properties of the Pallier cryptosystem are:
 \begin{align*}
	 D(c_1+_E c_2) \;&=\; m_1 + m_2 \mod N,\\
	 D(a \times_E c_1) &=\; a m_1 \mod N.
	\end{align*}

\subsection{ECDSA}
The Elliptic Curve Digital Signature Algorithm (ECDSA), presented in~\cite{ecdsa}, is a variant of the Digital Signature Algorithm (DSA)~\cite{kravitz1993digital} which uses elliptic curve cryptography.

Suppose Alice wants to send a signed message $m$ to Bob.
Initially, they agree on a cryptographic hash function~\cite{rogaway2004cryptographic} $H$, an elliptic curve $E$, a base point $\B$ for $E$, with $n$ the order of $\B$ a prime.
For any point $\mathcal{P} \in E$ we use the notation $\mathcal{P}_x$ to denote the value of the first coordinate of the point $\mathcal{P}$.
\\\pagebreak[3]
\\The protocol works as follows:
\begin{enumerate}
 \item Alice creates a key-pair consisting of a private integer $d$, selected uniformly at random in the interval $[1,n-1]$ and the public point $\mathcal{Q} = d\B$.
 \item Alice computes $e=H(m)$ and picks $k$ uniformly at random in the interval $[1,n-1]$.
 \item Alice computes the point $\mathcal{R} = k^{-1}\B$.
 \item Alice computes $s=k(e+rd)$ with $r=\mathcal{R}_x$.
 \end{enumerate}
The signature is the pair $(r,s)$.
\\
\\
To verify the signature Bob performs the following steps:
\begin{enumerate}
 \item Bob checks that $r,s \in [1,n-1]$,
 \item Bob computes $e= H(m)$,
 \item Bob computes $u_1 = es^{-1} \mod n$ and $u_2 = rs^{-1} \mod n$,
 \item Bob computes the point $\mathcal{U} = u_1\B + u_2\mathcal{Q} $,
 \item checks that $r \equiv \mathcal{U}_x \mod n$.
\end{enumerate}

\subsection{A Share Conversion Protocol}\label{mta}
Assume that we have two parties, Alice and Bob, holding two secrets, respectively $a, b \in \Z_q$ with $q$ prime, such that $ab = x \mod q$. We can imagine $a, b$ as private shards of a shared secret $x$. Alice and Bob would like to perform a multiplicative-to-additive conversion in order to compute $\alpha, \beta \in \Z_q$ such that $\alpha + \beta = x \mod q$.
In this section we will present a protocol for this based on the Paillier Encryption Scheme (see~\Cref{paillier}). We assume that Alice has a public key $A=(N,\Gamma)$, and $E_A$ will indicate the Pallier encryption with A. Moreover we need a value $K > q$.
\begin{enumerate}
 \item Alice initiates the protocol:
 \begin{itemize}
  \item Alice computes $\mathtt{c}_A = E_A(a)$ and sends it to Bob.
  \item Alice proves in ZK that $a <K$ via the first range proof explained in \Cref{range}.
 \end{itemize}
 \pagebreak[3]
 \item Bob generates his shard $\beta$:
 \begin{itemize}
  \item Bob computes $\mathtt{c}_B = b \times_E \mathtt{c}_A +_E E_A(\beta') = E_A(ab + \beta')$ where $\beta'$ is chosen uniformly at random in $\Z_N$.
  \item Bob sends $\mathtt{c}_B$ to Alice.
  \item Bob proves in ZK that $b<K$ via the second range proof presented in \Cref{range}.
  \item If $B=g^b$ is public Bob proves in ZK that he knows $b, \beta'$ such that $g^b=B$ and $\mathtt{c}_B = b \times_E \mathtt{c}_A +_E E_A(\beta').$
 \end{itemize}
 \item Alice decrypts $\mathtt{c}_B$ to obtain $\alpha'$.
 \item Alice obtains her shard $\alpha = \alpha' \mod q$, Bob obtains his shard $\beta = -\beta'$.
 \end{enumerate}
The protocol takes two different names depending on whether $B = g^b$ is public or not. In the first case we refer to this protocol as MtAwc (Multiplicative to Additive with check), because Bob performs the extra check at the end, in the second we refer to it simply as MtA.

For more details about the protocol and the security proof see~\cite{gennaro2018fast}.

\subsection{Feldman-VSS}\label{feldman}
Feldman's VSS scheme\cite{paperFeldman} is a verifiable secret sharing scheme built on top of Shamir's scheme\cite{shamirSS}. A secret sharing scheme is verifiable if auxiliary information is included, that allows players to verify the consistency of their shares.
We use a simplified version of Feldman's protocol: if the verification fails the protocol does not attempt to recover excluding malicious participants, instead it aborts altogether.
In a sense we consider \emph{somewhat honest} participants, for this reason we do not need stronger schemes such as~\cite{gennaro1999secure,schoenmakers1999simple}.
\\The scheme works as follows:
\begin{enumerate}
 \item A cyclic group $\mathbb{G}$ of prime order $p$ is chosen, as well as a generator $g \in \mathbb{G}$. The group $\mathbb{G}$ must be chosen such that the discrete logarithm is hard to compute.
 \item The dealer computes a random polynomial $P$ of degree $t$ with coefficients in $\Z_p$, such that $P(0)=s$ where $s$ is the secret to be shared.
 \item Each of the $n$ share holders receive a value $P(1),...,P(n) \mod p$. So far, this is exactly Shamir's scheme. 
 \item To make these shares verifiable, the dealer distributes commitments to the coefficients of $p$. Let $P(X)=s+\sum_{i=1}^n a_i X^i$, then the commitments are $c_0=g^s$ and $c_i=g^{a_i}$ for $i>0$.
 \item Any party can verify its share in the following way: let $\alpha$ be the share received by the $i$-th party, then it can check if $\alpha=P(i)$ by checking if the following equality holds:
 $$g^\alpha = \prod_{j=0}^t c_j^{(i^j)} = g^s \prod_{j=1}^t g^{a_j(i^j)} = g^{s + \sum_{j=1}^t a_j(i^j)}=g^{P(i)}.$$
\end{enumerate}
In the proof we will need to simulate a $(2,2)$-threshold instance of this protocol without knowing the secret value $s$.

Let us use an additive group with generator $\B$, and let $\mathcal{Y} = s \B$, the simulation proceeds as follows:
\begin{itemize}
 \item the dealer selects two random values $a,b$ and forces $P(1) = a$, $P(2) = b$;
 \item then it computes:
	\begin{align}\label{ECVSS}
	 c_1&=(a\B-\mathcal{Y}),\\
	 c_2&=\frac{1}{2}(b\B-\mathcal{Y});
	\end{align}
	\item the other players can successfully verify their shards, checking that
	\begin{align}
	 a\B &= \mathcal{Y} + c_1 = \mathcal{Y} +a\B-\mathcal{Y},\\
	 b\B &= \mathcal{Y} + 2 c_2 = \mathcal{Y}+2\cdot\frac{1}{2}(b\B-\mathcal{Y}).
	\end{align}
\end{itemize}

\section{Protocol Description}\label{protocol}
In this section we describe the details of our protocol.
After some common parameters are established, one player chooses a long-term asymmetric key and then can go offline, leaving the proper generation of the signing key to the remaining two participants. 
For this reason the signature algorithm is presented in two variants, one used jointly by the two players who performed the Key-Generation, and one used by the offline player and one of the others.
\\More specifically the protocol is comprised by four phases:
\begin{enumerate}
 \item \textbf{Setup Phase} (\Cref{setup}): played by all the parties, it is used to decide common parameters. Note that in many contexts these parameters are mandated by the application, so the parties merely acknowledge them, possibly checking they respect the required security level.
 \item \textbf{Key-Generation} (\Cref{key-gen}): played by only two parties, from now on $P_1$ and $P_2$. It is used to create a public key and the private shards for each player.
 \item \textbf{Ordinary Signature} (\Cref{ordinarysignature}): played by $P_1$ and $P_2$. As the name suggests this is the normal use-case of the protocol.
 \item \textbf{Recovery Signature} (\Cref{recoverysignature}): played by $P_3$ and one between $P_1$ and $P_2$. This models the unavailability of one player, with $P_3$ stepping up as a replacement.
\end{enumerate}
From here on with the notation ``$P_i$ does something'', we mean that both $P_1$ and $P_2$ perform the prescribed task independently.
Similarly, the notation ``$P_i$ sends something to $P_j$'' means that $P_1$ sends to $P_2$ and $P_2$ sends to $P_1$.

\subsection{Setup Phase}\label{setup}
 This phase involves all the participants and is used to decide the parameters of the algorithm.
 \\The parameters involved are the following:
 \\
 \\
 \begin{tabular}{|ll|}
 	\hline
 	\textbf{Player 1 and 2} & \\ \hline
 	\textbf{Input:} & $-$ \\
 	\textbf{Private Output:} & $-$\\
 	\textbf{Public Output:} & $E, \B, q, H$\\
 	\hline
 \end{tabular}
 \begin{tabular}{|ll|}
 	\hline
 	\textbf{Player 3} & \\ \hline
 	\textbf{Input:} & $-$ \\
 	\textbf{Private Output:} & $\sk_3$\\
 	\textbf{Public Output:} & $\pk_3$\\
 	\hline
 \end{tabular}
 \\
 \\
 $P_3$ chooses an asymmetric encryption algorithm and a key pair $(\pk_3, \sk_3)$, then it publishes $\pk_3$, keeping $\sk_3$ secret. $\pk_3$ is the key that $P_1$ and $P_2$ will use to communicate with $P_3$. 
 The algorithm which generates the key pair ($\sk_3$, $\pk_3$) and the encryption algorithm itself are unrelated to the signature algorithm, but it is important that both of them are secure.
 \\More formally we require that the encryption protocol has the property of IND-CPA \cite{bellare2005introduction,indcpa}:
 \begin{defi}[IND-CPA]
  Let $\Pi = (\mathsf{Gen}, \mathsf{Enc}, \mathsf{Dec})$ be a public key encryption scheme. Let us define the
following experiment between an adversary $\A$ and a challenger $\mathscr{C}^b$ parameterized by a bit $b$:
\begin{enumerate}
\item The challenger runs $\mathsf{Gen}(1^k)$ to get $\sk$ and $\pk$, the secret and public keys. Then it gives $\pk$ to $\A$.
\item $\A$ outputs two messages $(m_0, m_1)$ of the same length.
\item The challenger computes $\mathsf{Enc}(\pk, m_b)$ and gives it to $\A$.
\item $\A$ outputs a bit $b'$(if it aborts without giving any output, we just set $b'=0$). The challenger returns $b'$ as the output of the game.
\end{enumerate}
We say that $\Pi$ has the property of being \emph{indistinguishable under chosen plaintext attacks} (IND-CPA) or simply  \emph{secure against a chosen plaintext attack}, if for any $k$ and any probabilistic polynomial time adversary $\A$ the function
\begin{equation}
 \mathsf{Adv}(\A) = \Pb[\mathscr{C}^1(\A, k) = 1] - \Pb[\mathscr{C}^0(\A, k)= 1],
\end{equation}
i.e. $\mathsf{Adv}(\A) = \Pb[b'=b] - \Pb[b'\ne b]$, is negligible.
 \end{defi}
 Then $P_1$ and $P_3$ need to agree on a secure hash function $H$, an elliptic curve $E$ with group of points of prime order $q$, and a generator $\B \in E$ of said group. The order identifies the ring $\Z_q$ used for scalar values.
 
\subsection{Key-Generation}\label{key-gen}
\begin{tabular}{|ll|}
 	\hline
 	\textbf{Player 1} & \\ \hline
 	\textbf{Input:} & $\pk_3$ \\
 	\textbf{Private Output:} & $p_1, q_1, \w_1$\\
 	\textbf{Shared Secrets:} 
 	& $\rec_{1,3}, \rec_{2,3}, d$\\
 	\textbf{Public Output:} & $\Gamma_1, N_2, \Gamma_2, \mathcal{Y}$\\
 	\hline
 \end{tabular}
 \begin{tabular}{|ll|}
	\hline
	\textbf{Player 2} & \\ \hline
	\textbf{Input:} & $\pk_3$ \\
	\textbf{Private Output:} & $p_2, q_2, \w_2$\\
 	\textbf{Shared Secrets:}
	&$\rec_{1,3}, \rec_{2,3}, d$\\
	\textbf{Public Output:} & $\Gamma_2, N_1, \Gamma_1, \mathcal{Y}$\\
	\hline
\end{tabular}

The protocol proceeds as follows:
\begin{enumerate}
	\item Secret key-generation and communication:
	\begin{enumerate}[label=\alph*.]
 		\item $P_i$ generates a Paillier public key $(N_i, \Gamma_i)$ and its corresponding secret key $(p_i, q_i)$.
 		\item $P_i$ selects randomly $u_i, \sigma_{3, i} \in \Z_q$.
 		\item $P_i$ computes $[\mathsf{KGC}_i,\mathsf{KGD}_i]:=\Com(u_i\B)$.
 		\item $P_i$ computes $[\mathsf{KGCS}_i,\mathsf{KGDS}_i]:=\Com(\sigma_{3, i}\B)$.
 		\item $P_i$ sends $\mathsf{KGC}_i$, $\mathsf{KGCS}_i$, and $(N_i, \Gamma_i)$ to $P_j$.
 		\item $P_i$ sends $\mathsf{KGD}_i,\mathsf{KGDS}_i$ to $P_j$.
 		\item $P_i$ computes $u_j\B=\Ver(\mathsf{KGC}_j, \mathsf{KGD}_j)$ and $\sigma_{3, j}\B = \Ver(\mathsf{KGCS}_j, \mathsf{KGDS}_j)$.
 	\end{enumerate}
 \item Feldman VSS protocol and generation of $P_3$'s data:
	\begin{enumerate}[label=\alph*.]
		\item $P_i$ selects randomly $m_i \in \Z_q$. 
		\item $P_i$ sets $f_i(X) = u_i+m_i X$, and $\sigma_{i,1}=f_i(2), \sigma_{i,2}=f_i(3)$, ${\sigma_{i,3}=f_i(1)}$. Then $P_i$ computes and distributes the shards $c_{i,j}$ for the Feldman-VSS, as described in \Cref{feldman}.
		\item Everyone checks the integrity and consistency of the shards according to the VSS protocol.
		\item $P_i$ encrypts $\sigma_{i,3}$ and $\sigma_{3,i}$ with $\pk_3$ to obtain $\rec_{i,3}$.
		\item $P_i$ sends $\sigma_{i,j}, m_i\B, \rec_{i,3}$ to $P_j$.
	\end{enumerate}
 	\item $P_i$ computes the private key $x_i = \sigma_{1,i} + \sigma_{2,i} + \sigma_{3,i}$.
 	\item $P_i$ proves in ZK that it knows $x_i$ using Schnorr's protocol of \Cref{Schnorr}.
 	\item $P_i$ proves in ZK that it knows $p_i, q_i$ such that $N_i = p_iq_i$ using the integer factorization ZKP of \Cref{fact}.
 	\item Public key-generation and shares conversion:
 	\begin{enumerate}[label=\alph*.]
 		\item the public key is $\mathcal{Y}=\sum_{i=1}^3 u_i\B$, where $u_3\B=3(\sigma_{3,1}\B) - 2 (\sigma_{3,2}\B)$. So $u_3 = 3\sigma_{3,1} - 2 \sigma_{3,2}$. From now on we will set $ u = \sum_{i=1}^3 u_i$. Obviously $u\B = \mathcal{Y}$.
 		\item the private key of $P_1$ is $\w_1= 3x_1$, while the private key of $P_2$ is $w_2=-2x_2$. We can observe that $w_1 + w_2 = u$.
 		\item $P_1$ and $P_2$ can compute the common secret $d=(\sigma_{2,1}\sigma_{2,3}\B)_x$ that will be used for key derivation.
 	\end{enumerate}
\end{enumerate}
\begin{oss}\label{ossu3}
 We define $u_3 = 3\sigma_{3,1} - 2 \sigma_{3,2}$ because we need to be consistent with the Feldman-VSS protocol. Indeed, suppose that $\sigma_{3,2}$ and $\sigma_{3,1}$ are valid shards of a Feldman-VSS protocol where the secret is $u_3$. Since there is an $m_3$ such that $\sigma_{3,2} = u_3+ m_3 \cdot 3$ and $\sigma_{3,1} = u_3+ m_3 \cdot 2$, we have that: $$3\sigma_{3,1}-2\sigma_{3,2} = 3u_3+6m_3-2u_3-6m_3=u_3.$$
 Note that $u_3\B$ can be computed by both $P_1$ and $P_2$, but they cannot compute $u_3$.
 \end{oss}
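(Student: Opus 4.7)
The plan is to verify two separate claims contained in the observation: the algebraic identity $u_3 = 3\sigma_{3,1} - 2\sigma_{3,2}$, and the asymmetry that $P_1$ and $P_2$ can recover $u_3\B$ but not the scalar $u_3$ itself.

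For the identity, the key idea is to recognize the formula as Lagrange interpolation at $X=0$ for a degree-one polynomial sampled at the nodes $X=2$ and $X=3$. Concretely, I would substitute the hypothesized evaluations $\sigma_{3,1}=u_3+2m_3$ and $\sigma_{3,2}=u_3+3m_3$ (reading the indexing $\sigma_{i,1}=f_i(2)$, $\sigma_{i,2}=f_i(3)$ from \Cref{key-gen}) into $3\sigma_{3,1}-2\sigma_{3,2}$ and observe that the $m_3$-terms cancel, leaving exactly $u_3$. The coefficients $3$ and $-2$ are simply the Lagrange weights $\ell_1(0)=\tfrac{0-3}{2-3}$ and $\ell_2(0)=\tfrac{0-2}{3-2}$, so the identity is forced by the degree-one assumption and does not depend on the actual value of $m_3$.

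For the second claim, I would note that after the commit/open exchange in steps 1.c--1.g both $P_1$ and $P_2$ hold the curve points $\sigma_{3,1}\B$ and $\sigma_{3,2}\B$. By linearity of the scalar action they can therefore compute $3(\sigma_{3,1}\B)-2(\sigma_{3,2}\B)$ in $E$, and by the identity just proved this equals $u_3\B$. Conversely, reconstructing $u_3$ as an integer requires knowing both scalars $\sigma_{3,1}$ and $\sigma_{3,2}$; but $\sigma_{3,i}$ is sampled privately by $P_i$ in step 1.b and the other party sees it only as a group element, so extracting the missing scalar would reduce to the discrete logarithm problem in $E$, which is infeasible under the working assumptions of \Cref{preliminaries}.

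There is no genuine obstacle here: the observation is essentially a sanity check that the ad hoc definition of $u_3$ agrees with what a full Feldman-VSS instance centred on $P_3$ would have produced. The only point requiring attention is keeping the evaluation-point convention straight (that $\sigma_{3,1}$ corresponds to $X=2$ and $\sigma_{3,2}$ to $X=3$, not vice versa); once this is fixed, both parts reduce to one-line checks.
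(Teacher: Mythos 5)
Your proposal is correct and matches the paper's own argument: the core step is exactly the substitution $\sigma_{3,1}=u_3+2m_3$, $\sigma_{3,2}=u_3+3m_3$ followed by cancellation of the $m_3$-terms, with the evaluation-point convention $\sigma_{i,1}=f_i(2)$, $\sigma_{i,2}=f_i(3)$ read correctly. The Lagrange-weight interpretation of the coefficients $3$ and $-2$ and the explicit discrete-logarithm justification for why the parties obtain only $u_3\B$ and not $u_3$ are accurate elaborations of what the paper leaves implicit, but they do not change the route.
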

 
\subsection{Signature Algorithm}\label{signature}
This protocol is used by two players, called $P_A$ and $P_B$, to sign messages.
$P_1, P_2$, and $P_3$ take the role of either $P_A$ or $P_B$ depending on the situation, see \Cref{ordinarysignature,recoverysignature}.

The participants agree on a message $M$ to sign.
The goal of this protocol is to produce a valid ECDSA signature $(r,s)$ for the public key $y$.\\
The parameters involved are:
\\
\\
\begin{tabular}{|ll|}
 \hline
 \textbf{Player A} & \\ \hline
 \textbf{Input:} & $M, \w_A, \Gamma_A, p_A,$ \\
 & $q_A, N_B, \Gamma_B, \mathcal{Y}$ \\
 \textbf{Public Output:} & $(r,s)$\\
 \hline
\end{tabular}
\begin{tabular}{|ll|}
 \hline
 \textbf{Player $B$} & \\ \hline
 \textbf{Input:} & $M, \w_B, \Gamma_B, p_B,$\\
 &$ q_B, N_A, \Gamma_A, \mathcal{Y}$ \\
 \textbf{Public Output:} & $(r,s)$\\
 \hline
\end{tabular}
\pagebreak[3]
\\The protocol works as follows:
 \begin{enumerate}
 	\item Commitment phase:
 	\begin{enumerate}[label=\alph*.]
 	\item $P_i$ picks randomly $k_i, \gamma_i \in \Z_q$. 
 \item $P_i$ computes $\mathcal{G}_i = \gamma_i \B$ and $[\Delta_i, D_i] = \Com(\mathcal{G}_i)$. \\We define $\gamma = \gamma_A+\gamma_B$ and $k=k_A+k_B$.
 	\item $P_i$ sends $\Delta_i$ to $P_j$.
 	\end{enumerate}
 	\item Multiplicative to additive conversion:
 	\begin{enumerate}[label=\alph*.]
 		\item $P_i$ re-computes the parameters for Paillier encryption from its keys: $\lambda_i = \lcm(p_i-1,q_i-1), {\mu_i=(L_i(\Gamma_i^{\lambda_i} \mod N_i^2))^{-1} \mod N_i}$ where $L_i(x) = \frac{x-1}{N_i}$.
 		\item $P_A$ and $P_B$ run $\MtA(k_A, \gamma_B)$ to get respectively $\alpha_{A,B}, \beta_{A,B}$ such that $k_A\gamma_B=\alpha_{A,B} + \beta_{A,B}$. They also run it on $k_B, \gamma_A$ to get respectively $\beta_{B,A}$ and $\alpha_{B,A}$.
 		\item $P_i$ sets $\delta_i=k_i\gamma_i+\alpha_{i,j} + \beta_{j,i}$.
 		\item $P_A$ and $P_B$ run $\MtAwc(k_A, \w_B)$ to get respectively $\mu_{A,B}$ $\nu_{A,B}$ such that $k_A\w_B = \mu_{A,B} +\nu_{A,B}$. They also run it on $k_B, \w_A$ to get respectively $\nu_{B,A}$ and $\mu_{B,A}$.
 		\item $P_i$ sets $\sigma_i= k_i \w_i + \mu_{ij} + \nu_{ji}$. We can observe that $\sum \sigma_i = ku$.
 		\item $P_i$ sends $\delta_i$ to $P_j$.
 		\item $P_A$ and $P_B$ compute $\delta = \delta_A+\delta_B$ and $\delta^{-1} \mod q$.
 	\end{enumerate}
 \item Decommitment phase and ZKP:
 \begin{enumerate}[label=\alph*.]
 	\item $P_i$ sends $D_i$.
 	\item $P_i$ computes $\mathcal{G}_j = \Ver(\Delta_j,D_j)$.
 	\item $P_i$ proves in ZK that it knows $\gamma_i$ such that $\gamma_i\B=\mathcal{G}_i$, using Schnorr's protocol.
 	\item $P_A$ and $P_A$ set $\mathcal{R} = \delta^{-1}(\mathcal{G}_A+\mathcal{G}_B)$ and $r=\mathcal{R}_x$. We can observe that $\delta = \gamma k$ and $\mathcal{G}_A+\mathcal{G}_B = \gamma\B$, so $\mathcal{R}=k^{-1}\B$.
 \end{enumerate}
\item Signature generation:\label{ECsgigen}
\begin{enumerate}[label=\alph*.]
 	\item Both players set $m=H(M)$.
 	\item $P_i$ computes $s_i=mk_i+r\sigma_i$.
 	\item\label{ECsigcheck_com} $P_i$ picks uniformly at random $l_i, \rho_i \in \Z_q$, and computes $\mathcal{W}_i:=s_i\mathcal{R}+l_i\B$, $\mathcal{Z}_i=\rho_i\B$, and then $[\hat{\Delta}_i, \hat{D}_i] = \Com(\mathcal{W}_i,\mathcal{Z}_i)$.
 	\item $P_i$ sends $\hat{\Delta}_i$ to $P_j$. 
 	\item $P_i$ sends $\hat{D}_i$ to $P_j$.
 	\item $P_i$ computes $[\mathcal{W}_j,\mathcal{Z}_j] := \Ver(\hat{\Delta}_j,\hat{D}_j)$.
 	\item Each $P_i$ proves in ZK that it knows $s_i,l_i,\rho_i$ such that $\mathcal{W}_i = s_i R + l_i \B$ and $\mathcal{Z}_i = \rho_i \B$ (if a ZKP fails, the protocol aborts).
 	\item $P_A$ and $P_A$ compute $\mathcal{W} := -m \B -r y + \mathcal{W}_A + \mathcal{W}_B$ and $\mathcal{Z} := \mathcal{Z}_A + \mathcal{Z}_B$.
 	\item Each $P_i$ computes $\mathcal{U}_i := \rho_i \mathcal{W}$, $\mathcal{T}_i := l_i \mathcal{Z}$ and $[\tilde{\Delta}_i,\tilde{D}_i] := \Com(\mathcal{U}_i, \mathcal{T}_i)$.
 	\item $P_i$ sends $\tilde{\Delta}_i$ to $P_j$.
 	\item $P_i$ sends $\tilde{D}_i$ to $P_j$.
 	\item $P_i$ computes $[\mathcal{U}_j,\mathcal{T}_j] := \Ver(\tilde{\Delta}_j,\tilde{D}_j)$.
 	\item If $\mathcal{T}_1 + \mathcal{T}_2 \ne \mathcal{U}_1 + \mathcal{U}_2$ the protocol aborts.
 	\item $P_i$ sends $s_i$.
 	\item $P_1$ and $P_2$ compute $s := s_1 + s_2$.
 	\item If $(r,s)$ is not a valid signature, the players abort, otherwise they accept and end the protocol.
\end{enumerate}
\end{enumerate}
\subsection{Ordinary Signature}\label{ordinarysignature}
This is the case where $P_1$ and $P_2$ wants to sign a message $m$. They run the signature algorithm of \Cref{signature} with the following parameters (supposing $P_1$ play the roles of $P_A$ and $P_2$ of $P_B$):
\\
\\
 \begin{tabular}{|ll|}
 	\hline
 	\textbf{Player A} & \\ \hline
 	\textbf{Input:} & $M, \w_1, \Gamma_1, p_1,$ \\
 	& $q_1, N_2, \Gamma_2, \mathcal{Y}$ \\
 	\textbf{Public Output:} & $(r,s)$\\
 	\hline
 \end{tabular}
 \begin{tabular}{|ll|}
	\hline
	\textbf{Player $B$} & \\ \hline
	\textbf{Input:} & $M, \w_2, \Gamma_2, p_2,$\\
	&$ q_2, N_1, \Gamma_1, \mathcal{Y}$ \\
	\textbf{Public Output:} & $(r,s)$\\
	\hline
 \end{tabular}
\subsection{Recovery Signature}\label{recoverysignature}
If one between $P_1$ and $P_2$ is unable to sign, then $P_3$ has to come back online and a recovery signature is performed.

We have to consider two different cases, depending on who is offline.
First we consider the case in which $P_2$ is offline, therefore $P_1$ and $P_3$ sign.\\
The parameters involved are:
\\
\\
\begin{tabular}{|ll|}
	\hline
	\textbf{Player 1} & \\ \hline
	\textbf{Input:} & $M, \w_1, \Gamma_1, p_1, q_1, \mathcal{Y}$ \\
	& $\rec_{2,3}, \rec_{1,3}$\\
	\textbf{Public Output:} & $(r,s)$\\
	\hline
\end{tabular}
\begin{tabular}{|ll|}
	\hline
	\textbf{Player 3} & \\ \hline
	\textbf{Input:} & $M,\sk_3$ \\
	 &\\
	\textbf{Public Output:} & $(r,s)$\\
	\hline
\end{tabular}
\\The workflow in this case is:
\begin{enumerate}
	\item Communication:
	\begin{enumerate}[label=\alph*.)]
		\item $P_1$ contacts $P_3$, which comes back online.
		\item $P_1$ sends $y$ and $\rec_{1,3}, \rec_{2,3}$ to $P_3$.
	\end{enumerate}
	\item Paillier keys generation and exchange:
	\begin{enumerate}[label=\alph*.)]
		\item $P_3$ generates a Paillier public key $(N_3, \Gamma_3)$ and its relative secret key $(p_3, q_3)$.
		\item $P_i$ sends $N_i, \Gamma_i$ to the other party.
		\item $P_i$ proves to $P_j$ that it knows $p_i, q_i$ such that $N_i = p_i q_i$ using integer factorization ZKP.
	\end{enumerate}
	\item $P_3$'s secrets generation:
	\begin{enumerate}[label=\alph*.)]
		\item $P_3$ decrypts $\rec_{1,3}$ and $\rec_{2,3}$ with its private key $\sk_3$, getting $\sigma_{1,3}, \sigma_{3,1}$, $\sigma_{2,3}, \sigma_{3,2}$.
		\item $P_3$ computes $x_3=\sigma_{1,3}+2\sigma_{3,1}-\sigma_{3,2} +\sigma_{2,3}$.
		\item $P_i$ proves in ZK that it knows $x_i$ using Schnorr's protocol.
	\end{enumerate}
	\item Signature generation:
	\begin{enumerate}[label=\alph*.)]
		\item $P_1$ computes $\tilde{\w}_1:=-\frac{1}{3}\w_1$.
		\item $P_3$ computes $\w_3:=2x_3$.
		\item $P_1$ and $P_3$ perform the Signature Algorithm of \Cref{signature} as $P_A$ and $P_B$ respectively, where the $P_1$ uses $\tilde{\w}_1$ in place of $\w_A$ and $P_3$ uses $\w_3$ in place of $\w_B$ (the other parameters are straightforward).
	\end{enumerate}
\end{enumerate}
\vspace*{.5cm}
We consider now the second case in which $P_1$ is offline, therefore $P_2$ and $P_3$ sign.
\\The parameters involved are:
\\
\\
\begin{tabular}{|ll|}
	\hline
	\textbf{Player 2} & \\ \hline
	\textbf{Input:} & $M, \w_2, \Gamma_2, p_2, q_2, \mathcal{Y}$ \\
	& $\rec_{2,3}, \rec_{1,3}$\\
	\textbf{Public Output:} & $(r,s)$\\
	\hline
\end{tabular}
\begin{tabular}{|ll|}
	\hline
	\textbf{Player 3} & \\ \hline
	\textbf{Input:} & $M, \sk_3$ \\
	&\\
	\textbf{Public Output:} & $(r,s)$\\
	\hline
\end{tabular}
\\
\\
The workflow of this case is:
\begin{enumerate}
	\item Communication:
	\begin{enumerate}[label=\alph*.)]
		\item $P_2$ contacts $P_3$, which comes back online.
		\item $P_2$ sends $y$ and $\rec_{1,3}, \rec_{2,3}$ to $P_3$.
	\end{enumerate}
	\pagebreak[3]
	\item Paillier keys generation and exchange:
	\begin{enumerate}[label=\alph*.)]
		\item $P_3$ generates a Paillier public key $(N_3, \Gamma_3)$ and its relative secret key $(p_3, q_3)$.
		\item $P_i$ sends $N_i, \Gamma_i$ to the other party.
		\item $P_i$ proves to $P_j$ that it knows $p_i, q_i$ such that $N_i = p_i q_i$ using integer factorization ZKP.
	\end{enumerate}
	\item $P_3$'s secrets generation:
	\begin{enumerate}[label=\alph*.)]
		\item $P_3$ decrypts $\rec_{1,3}$ and $\rec_{2,3}$ with its private key $\sk_3$, getting $\sigma_{1,3}$, $\sigma_{3,1}, \sigma_{2,3}, \sigma_{3,2}$.
		\item $P_3$ computes $x_3=\sigma_{1,3}+2\sigma_{3,1}-\sigma_{3,2} +\sigma_{2,3}$.
			\item $P_i$ proves in ZK that it knows $x_i$ using Schnorr's protocol.
	\end{enumerate}
	\item Signature generation:
	\begin{enumerate}[label=\alph*.)]
		\item $P_2$ computes $\tilde{\w}_2:=\frac{1}{4}\w_2$.
		\item $P_3$ computes $\w_3:=\frac{3}{2}x_3$.
		\item $P_2$ and $P_3$ perform the Signature Algorithm of \Cref{signature} as $P_A$ and $P_B$ respectively, where the $P_2$ uses $\tilde{\w}_2$ in place of $\w_A$ and $P_3$ uses $\w_3$ in place of $\w_B$ (the other parameters are straightforward).
	\end{enumerate}
\end{enumerate}
\begin{oss}
 We define $\tilde{\w}_i$ in this way since we need $\tilde{\w}_i+ \w_3 = u$, with $i \in \{1,2\}$. Moreover we define $x_3$ in this way for the same reasons explained in Observation \ref{ossu3}.
\end{oss}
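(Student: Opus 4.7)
The observation packages two claims: (i) the definition $x_3 := \sigma_{1,3} + 2\sigma_{3,1} - \sigma_{3,2} + \sigma_{2,3}$ deserves to be called $P_3$'s total share in the same sense that $x_1,x_2$ are in \Cref{key-gen}; and (ii) with the stated choices of $\tilde{\w}_i$ and $\w_3$, the identity $\tilde{\w}_i + \w_3 = u$ holds for $i \in \{1,2\}$. The plan is to reduce both to Lagrange interpolation on the degree-one polynomials $f_i(X) = u_i + m_i X$ arising from the Feldman-VSS step.

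First I would fix the evaluation points implicitly attached to each player: by construction, $P_1$, $P_2$, $P_3$ hold $f_i(2)$, $f_i(3)$, $f_i(1)$ respectively (for $i=1,2$), and by \Cref{ossu3} the same picture extends to an implicit $f_3(X)=u_3+m_3 X$ for which $\sigma_{3,1}=f_3(2)$ and $\sigma_{3,2}=f_3(3)$. The value $\sigma_{3,3}$ that $P_3$ \emph{would} have received is then $f_3(1) = 2\sigma_{3,1}-\sigma_{3,2}$, by Lagrange interpolation at $X=1$ through the nodes $2,3$. Substituting this into the definition of $x_3$ gives $x_3 = \sigma_{1,3}+\sigma_{2,3}+\sigma_{3,3}$, the exact analogue of the formulas for $x_1$ and $x_2$; this settles claim (i).

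For claim (ii) I would use that $u = u_1+u_2+u_3 = \sum_i f_i(0)$ and reconstruct each $f_i(0)$ from the pair of shares actually available during a recovery signature. When $P_2$ is offline, the nodes are $2$ and $1$, so the Lagrange weights at $X=0$ are $-1$ and $2$, giving $f_i(0)=-\sigma_{i,1}+2\sigma_{i,3}$; summing over $i$ yields $u = -x_1+2x_3$, which matches $\tilde{\w}_1+\w_3$ because $\tilde{\w}_1 = -\tfrac{1}{3}\w_1 = -x_1$ and $\w_3 = 2x_3$. When $P_1$ is offline, the nodes are $3$ and $1$, the Lagrange weights become $-\tfrac{1}{2}$ and $\tfrac{3}{2}$, and the same argument produces $u = -\tfrac{1}{2}x_2+\tfrac{3}{2}x_3 = \tilde{\w}_2+\w_3$ with $\tilde{\w}_2 = \tfrac{1}{4}\w_2 = -\tfrac{1}{2}x_2$ and $\w_3 = \tfrac{3}{2}x_3$.

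The only point that is not purely mechanical is keeping track of which evaluation point is attached to which player, since the indexing in \Cref{key-gen} is slightly asymmetric (note in particular that $P_3$'s label is $1$, not $3$). Once that correspondence is fixed, each claim is a one-line degree-one interpolation, and the coefficients $3,-2$ already used for $\w_1,\w_2$, together with the new pairs $-1,2$ and $-\tfrac{1}{2},\tfrac{3}{2}$, are exactly the three possible vectors of Lagrange weights for the three $2$-subsets of $\{1,2,3\}$.
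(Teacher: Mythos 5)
Your proof is correct and takes essentially the same route as the paper: the paper justifies both definitions by the Feldman-VSS consistency argument of Observation~\ref{ossu3}, which is exactly the degree-one Lagrange interpolation you carry out (at $X=1$ through nodes $2,3$ for $x_3$, and at $X=0$ through the relevant pair of nodes for $\tilde\w_i+\w_3=u$). You merely make explicit, and verify systematically, what the paper leaves as a one-line remark.
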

 \subsection{Key Derivation}\label{key-derivation}
 In many applications it is useful to deterministically derive multiple signing keys from a single \emph{master} key (see e.g. BIP32 for Bitcoin wallets~\cite{wuille_2012}), and this practice becomes even more useful since in our protocol the key-generation is a multi-party computation.
 
 In order to perform the key derivation we need a derivation index $i$ and the common secret $d$ created during the Key-Generation protocol.
\\The derivation is performed as follows:
\begin{itemize}
\item $P_1$ and $P_2$ perform the key derivation:
\begin{itemize}
\item $\omega_1 \to \omega_1^i = \omega_1 + 3H(d||i),$
\item $\omega_2 \to \omega_2^i = \omega_2 - 2H(d||i),$
\item $\mathcal{Y} \to \mathcal{Y}^i = y+H(d||i)\B.$
\end{itemize}
\item $P_1$ and $P_3$ perform the key derivation:
\begin{itemize}
\item $\omega_1 \to \omega_1^i = \omega_1 - H(d||i),$
\item $\omega_3 \to \omega_3^i = \omega_3 + 2H(d||i),$
\item $\mathcal{Y} \to \mathcal{Y}^i = y+H(d||i)\B.$
\end{itemize}
\item $P_2$ and $P_3$ perform the key derivation:
\begin{itemize}
\item $\omega_2 \to \omega_2^i = \omega_2 - \frac{1}{2} H(d||i),$
\item $\omega_3 \to \omega_3^i = \omega_3 + \frac{3}{2}H(d||i),$
\item $\mathcal{Y} \to \mathcal{Y}^i = y+H(d||i)\B.$
\end{itemize}
\end{itemize}
\begin{oss}
 We observe that the algorithm outputs valid keys, such that, for example:
 $$ (\w_1^i+\w_2^i)\B = y^i.$$
 Since $(\w_1^i+\w_2^i) = \w_1 + \w_2 + H(d||i)$ we have that:
 $$ (\w_1^i+\w_2^i)\B = (\w_1 + \w_2 + H(d||i))\B = \mathcal{Y} + H(d||i)\B = \mathcal{Y}^i.$$
 With the same procedure we can prove that also the other pairs of derived keys are consistent.
\end{oss}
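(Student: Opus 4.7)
The plan is to verify the identity $(\w_A^i + \w_B^i)\B = \mathcal{Y}^i$ for each of the three pairs $(A,B) \in \{(1,2),(1,3),(2,3)\}$ listed in \Cref{key-derivation}. Throughout I will keep $i$ for the derivation index and abbreviate $h := H(d \| i)$, so that by definition $\mathcal{Y}^i = \mathcal{Y} + h\B$.

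First I would record the three ``reconstruction'' invariants that already hold before any derivation is applied. For the ordinary pair this is $\w_1 + \w_2 = u$, stated in step 6(b) of \Cref{key-gen}; for the recovery pairs the corresponding identities are $\tilde{\w}_1 + \w_3 = u$ and $\tilde{\w}_2 + \w_3 = u$, which the rescalings $-\tfrac{1}{3}\w_1$, $\tfrac{1}{4}\w_2$, $2x_3$, $\tfrac{3}{2}x_3$ of \Cref{recoverysignature} are designed to enforce (cf.\ the remark on $x_3$ in \Cref{ossu3} and the subsequent observation on $\tilde{\w}_i$). Combined with $u\B = \mathcal{Y}$ from step 6(a) of \Cref{key-gen}, these three invariants are the only arithmetic input the proof needs.

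The rest is a one-line computation per case, hinging on the fact that the two offsets added to the shards by the Key Derivation always sum to exactly $h$:
\begin{itemize}
    \item for $(P_1, P_2)$: $3h + (-2h) = h$;
    \item for $(P_1, P_3)$: $(-h) + 2h = h$;
    \item for $(P_2, P_3)$: $(-\tfrac{1}{2}h) + \tfrac{3}{2}h = h$.
\end{itemize}
Summing the two derived shards thus yields $u + h$, and multiplying by $\B$ gives $u\B + h\B = \mathcal{Y} + h\B = \mathcal{Y}^i$ by linearity of scalar multiplication on $E$, which is exactly the desired equality.

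The only subtle point, rather than a genuine obstacle, is notational: in the $(P_1, P_3)$ and $(P_2, P_3)$ cases the symbols $\omega_1, \omega_2, \omega_3$ appearing in \Cref{key-derivation} must be read as the rescaled $\tilde{\w}_1, \tilde{\w}_2, \w_3$ actually used inside the Recovery Signature, not as the raw shards $\w_1, \w_2$ produced in Key-Generation. Once this identification is pinned down, the whole observation reduces to the three elementary verifications above.
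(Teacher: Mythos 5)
Your proposal is correct and follows essentially the same argument as the paper: the two derivation offsets sum to $H(d\|i)$, so the derived shards sum to $u + H(d\|i)$ and multiplying by $\B$ gives $\mathcal{Y} + H(d\|i)\B = \mathcal{Y}^i$. The paper only writes out the $(P_1,P_2)$ case and appeals to ``the same procedure'' for the others, whereas you spell out all three (including the correct reading of $\omega_1,\omega_2,\omega_3$ as the rescaled recovery shards), but the underlying computation is identical.
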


\section{Security Proof} \label{ecdsasecurity}
As customary for digital signature protocols, we state the security of our scheme as an \emph{unforgeability} property, defined as follows (adapted from the classical definition introduced in~\cite{goldwasser1988digital}):
\begin{defi}
	We say that a $(t, n)$-threshold signature scheme is unforgeable if no malicious adversary who corrupts at most $t-1$ players can produce with non-negligible probability the signature on a new message $m$, given the view of \textbf{Threshold-Sign} on input messages $m_1,...,m_k$ (which the adversary adaptively chooses), as well as the signatures on those messages.
\end{defi}
Referring to this definition the security of our protocol derives from the following theorem, whose proof is the topic of this section:
 \begin{teo}\label{theorem_security}
 	Assuming that
 	\begin{itemize}
 	\item the ECDSA signature scheme is unforgeable,
 	\item the strong RSA assumption holds,
 	\item $(\Com, \Ver)$ is a non malleable commitment scheme,
 	\item the DDH assumption holds,
 	\item and that encryption algorithm used by $P_3$ is IND-CPA \footnote{In this proof we focus on the unforgeability property. We discuss other security aspects, such as recovery resiliency, in~\Cref{conclusions}.},
 	\end{itemize}
 the threshold ECDSA protocol is unforgeable.
 \end{teo}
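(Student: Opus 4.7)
The plan is a reduction: assuming a polynomial-time adversary $\A$ that forges with non-negligible probability in the threshold protocol, I construct a forger $\F$ against centralized ECDSA. $\F$ receives a public key $Y$ and an ECDSA signing oracle; it internally simulates the honest parties of the threshold protocol to $\A$, embedding $Y$ as the joint public key $\mathcal{Y}$, and converts $\A$'s eventual forgery on a fresh message into its own. Since at most $t-1 = 1$ party is corrupted, I split the analysis into three cases ($\A$ corrupts $P_1$, $P_2$, or $P_3$), the first two being symmetric.

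In the key-generation simulation (say $P_2$ is corrupted), $\F$ rewinds the Schnorr and factorization ZKPoKs to extract $u_2, m_2, \sigma_{3,2}, p_2, q_2$. It then uses the Feldman-VSS simulator described at the end of \Cref{feldman} to produce $P_1$'s VSS commitments backwards from arbitrarily chosen shares for $P_2$, so that the implicit $u_1$ satisfies $u_1 \B = Y - u_2 \B - (3\sigma_{3,1} - 2\sigma_{3,2})\B$ for a freely picked $\sigma_{3,1}$; the joint public key is then exactly $Y$. Schnorr and factorization proofs that $\F$ must supply on behalf of $P_1$ without the genuine witnesses are faked by their ZK simulators. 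When $P_3$ is honest, the recovery ciphertexts $\rec_{1,3}, \rec_{2,3}$ may hide arbitrary values (even zero) by IND-CPA of $\pk_3$, since $\A$ never sees a decryption. When instead $P_3$ is corrupted, $\F$ plays $P_1, P_2$ honestly but chooses their VSS contributions via the same simulator so that the joint key is $Y$; the $\rec_{i,3}$ are then computed honestly and remain consistent with $P_3$'s view after decryption.

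For each signing query on message $M$, $\F$ calls its ECDSA oracle to obtain a valid $(r,s)$ on $m = H(M)$ and then simulates a transcript of \Cref{signature} producing exactly $(r,s)$. The key observation is that the nonce point is forced by the signature via the verifier equation: $\mathcal{R} = s^{-1}(m\B + r\mathcal{Y})$ must satisfy $\mathcal{R}_x = r$. After extracting the corrupt party's $k_j, \gamma_j, \w_j$ from the MtA/MtAwc ZKPoKs and the Schnorr proof on $\mathcal{G}_j$, $\F$ exploits the hiding and non-malleability of $\Com$ to equivocate its own decommitment of $\mathcal{G}_i$ so that $\mathcal{G}_A + \mathcal{G}_B = \delta\mathcal{R}$; analogously it arranges $s_A + s_B = s$ by choosing $s_i$ and faking the corresponding Schnorr ZKP. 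Honest-party Paillier ciphertexts carrying sensitive $k_i$ or $\w_i$ are replaced by encryptions of random plaintexts via Paillier semantic security, while the MtA range proofs are handled by their ZK property. The consistency check $\mathcal{T}_1 + \mathcal{T}_2 = \mathcal{U}_1 + \mathcal{U}_2$ succeeds in the real execution and thus in the simulation, up to the negligible losses of the hybrids.

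The principal obstacle — inherited from \cite{gennaro2018fast} but made more delicate by the offline-$P_3$ structure — is the hybrid indistinguishability argument: one replaces the simulated components one at a time (ZK simulators for Schnorr, factorization, and range proofs; the Feldman-VSS simulator; hiding and non-malleability of $\Com$; Paillier semantic security; IND-CPA of $\pk_3$; and a DDH step to mask the intermediate elliptic-curve shares) and bounds each transition by the relevant assumption. The novel subtlety, absent from \cite{gennaro2018fast}, is that $\rec_{1,3}, \rec_{2,3}$ must remain indistinguishable across $\A$'s entire run, including any subsequent recovery signature; this forces a case split on whether $P_3$ is ever invoked, and it is exactly where IND-CPA of $\pk_3$ enters. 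Strong RSA makes the factorization ZKPoK of corrupted Paillier keys binding in the standard way. Once the hybrids go through, a fresh-message forgery output by $\A$ in the simulated world is by construction a valid ECDSA signature on that message under $Y$, contradicting ECDSA unforgeability.
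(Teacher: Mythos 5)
Your overall strategy matches the paper's: a reduction that embeds the centralized ECDSA challenge key into the joint key via a rewind of the key-generation commitments (so that $u_1\B = \mathcal{Y}_c - u_2\B - 3\sigma_{3,1}\B + 2\sigma_{3,2}\B$), simulates the Feldman-VSS and the ZKPs without witnesses, hides the recovery ciphertexts behind IND-CPA, and uses the signing oracle to force $\mathcal{R} = s^{-1}(m\B + r\mathcal{Y})$ in the signing queries. However, there is a genuine gap in your treatment of the signing simulation: you implicitly assume every execution is what the paper calls \emph{semi-correct}. Your claim that the check $\mathcal{T}_1 + \mathcal{T}_2 = \mathcal{U}_1 + \mathcal{U}_2$ ``succeeds in the real execution and thus in the simulation'' is exactly backwards for the problematic case. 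The adversary may deviate inside the $\MtA$/$\MtAwc$ steps or send an inconsistent $\delta_2$, so that $\mathcal{R} \ne k^{-1}\B$; then the real protocol \emph{aborts} with high probability at that check, and the forger --- which does not know $\w_1$ or its Paillier secret key --- cannot compute the honest $s_1$ and $\mathcal{U}_1$ needed to reproduce that abort faithfully. The paper resolves this by (i) guessing in advance the index of the first non-semi-correct signing query (a $1/Q$ loss that appears in the final bound $\e^3/(8Q)$), and (ii) for that query, substituting a random $\tilde s_1$ and a random $\mathcal{U}_1$ and proving, via two explicit DDH game hops $G_0 \to G_1 \to G_2$, that this is indistinguishable from the real aborting execution. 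This dichotomy and the DDH argument are the technical heart of the proof and are absent from your proposal; listing ``a DDH step to mask the intermediate elliptic-curve shares'' among the hybrids does not supply the construction.

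Two smaller points. First, you say $\F$ ``equivocates its own decommitment'' using hiding and non-malleability; non-malleability does not give equivocation. The paper deliberately avoids equivocable commitments: it rewinds $\A$ to \emph{before} $\F$'s commitment and re-commits to the new value, relying on the adversary committing first (a non-rushing adversary) and on non-malleability only to guarantee that $\A$'s decommitment is unchanged after the rewind. Second, the rewinding simulations require an expected-polynomial-time and success-probability analysis (the ``good random tape'' lemma and the argument that the simulated key-generation outputs $\mathcal{Y}_c$ for a polynomially large fraction of inputs); your proposal asserts the reduction goes through but does not account for these losses, which the paper needs to assemble the final non-negligible success probability of $\F$.
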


The proof will use a classical game-based argument, our goal is to show that if there is an adversary $\mathscr{A}$ that forges the threshold scheme with a non-negligible probability $\e>\lambda^{-c}$, for a polynomial $\lambda(x)$ and $c>0$, we can build a forger $\F$ that forges the centralised ECDSA scheme with non-negligible probability as well.

Since the algorithm presented is a $(2,3)$-threshold signature scheme the adversary will control one player and $\F$ will simulate the remaining two. Since the role of $P_3$ is different we have to consider two distinct cases: one for $\A$ controlling $P_3$ and one for $\A$ controlling one between $P_1$ and $P_2$ (whose roles are symmetrical). The second case is way more interesting and difficult, so it will be discussed first, and for now we suppose without loss of generality that $\A$ controls $P_2$.

\begin{defi}[Security Game:]
The security game between a challenger $\C$ and an adversary $\A$ is defined as follows:
\begin{itemize}
 \item $\C$ runs the preliminary phase and sets up the parameters, $\C$ controls both $P_1$ and $P_3$.
 \item $\C$ and $\A$ participate in the key-generation algorithm.
 \item $\A$ chooses adaptively some messages $m_1,...,m_l$ for some $l>0$ and asks for a signature on them.
 $\A$ could either participate in the signature or it can query to $\C$ a signature generated by $P_1$ and $P_3$.
 \item Eventually $\A$ outputs a new message $m \ne m_i \forall i$ and a valid signature for it with probability at least $\e$.
\end{itemize}
\end{defi}

If we denote with $\tau_\A$ the adversary's tape and with $\tau_i$ the tape of the honest player $P_i$ we can write:
 \begin{equation}\label{avversario}
 	 \Pb_{\tau_i, \tau_\A}[\A(\tau_\A)_{ P_i(\tau_i)} = \mathtt{forgery}] \ge \e,
 \end{equation}
 where $\Pb_{\tau_i, \tau_\A}$ means that the probability is taken over the random tape $\tau_\A$ of the adversary and the random tape $\tau_i$ of the honest player, while $\A(\tau_\A)_{P_i(\tau_i)}$ is the output of the iteration between the adversary $\A$, running on tape $\tau_\A$, and the player $P_i$, running on tape $\tau_i$ .
 We say that an adversary's random tape $\tau_\A$ is good if:
 \begin{equation}\label{defbuono}
  \Pb_{\tau_i}[\A(\tau_\A)_{ P_i(\tau_i)} = \mathtt{forgery}] \ge \frac{\e}{2}.
\end{equation}
Now we have the following Lemma, introduced in~\cite{gennaro2018fast}:
\begin{lem} \label{goodTape}
	If $\tau_\A$ is chosen uniformly at random, then the probability of choosing a good one is at least $\frac{\e}{2}$.
\end{lem}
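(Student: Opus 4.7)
The plan is to use a standard averaging argument (essentially a Markov-type inequality) on the function that measures, for each fixed adversarial tape, the conditional success probability over the honest player's tape. This is purely a probability-theoretic manipulation and does not require any cryptographic content beyond the hypothesis \eqref{avversario}.

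More precisely, I would define $f(\tau_\A) := \Pb_{\tau_i}[\A(\tau_\A)_{P_i(\tau_i)} = \mathtt{forgery}]$, so that by the tower property of conditional probability the hypothesis becomes $\mathbb{E}_{\tau_\A}[f(\tau_\A)] = \Pb_{\tau_i,\tau_\A}[\A(\tau_\A)_{P_i(\tau_i)} = \mathtt{forgery}] \ge \e$. By definition, $\tau_\A$ is good precisely when $f(\tau_\A) \ge \e/2$. Let $p := \Pb_{\tau_\A}[f(\tau_\A) \ge \e/2]$ denote the probability that a uniformly random adversarial tape is good; the goal is to show $p \ge \e/2$.

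To conclude, I would split the expectation of $f$ according to whether $\tau_\A$ is good or not. Since $f$ takes values in $[0,1]$, contributions from good tapes are bounded by $1$ and contributions from bad tapes are bounded by $\e/2$, giving
\begin{equation*}
\e \;\le\; \mathbb{E}_{\tau_\A}[f(\tau_\A)] \;\le\; p \cdot 1 + (1-p) \cdot \tfrac{\e}{2} \;\le\; p + \tfrac{\e}{2},
\end{equation*}
and rearranging yields $p \ge \e/2$, which is exactly the claim.

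There is no real obstacle here: the only subtlety is being careful that the probability in \eqref{avversario} factors correctly as an expectation over $\tau_\A$ of the conditional success probability over $\tau_i$, which is immediate since the two tapes are sampled independently. The rest is one line of arithmetic.
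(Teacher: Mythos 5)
Your proof is correct and is essentially the same averaging argument as the paper's: both split the overall success probability according to whether the adversarial tape is good, bound the conditional success probability by $1$ on good tapes and by $\e/2$ on bad ones, and rearrange. The only cosmetic difference is in the last step, where you drop the $-p\,\e/2$ term to get $p\ge\e/2$ directly, while the paper solves the inequality exactly to get $p\ge\e/(2-\e)$ before relaxing to $\e/2$.
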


\begin{proof}
	In the proof we will simplify the notation writing $\A(\tau_\A,\tau_i) = \mathtt{forgery} $ instead of ${\A(\tau_\A)_{P_i(\tau_i)} = \mathtt{forgery}}$. Moreover we write $b$ to identify a good tape, while $c$ will be a bad one.
	We can rewrite Equation	\ref{avversario} in this way:
	\begin{align}
	 A &= \Pb_{\tau_i,\tau_\A}(\tau_\A = b, A(\tau_\A, \tau_i)= \mathtt{forgery}) + \Pb_{\tau_i,\tau_\A}(\tau_\A = c, A(\tau_\A, \tau_i) = \mathtt{forgery})\nonumber\\
	 &= \Pb_{\tau_\A, \tau_i} (\tau_\A = b) \Pb_{\tau_i,\tau_\A}(A(\tau_\A, \tau_i) = \mathtt{forgery}|\tau_\A = b)\nonumber\\
	 &\phantom{=}\, + \Pb_{\tau_\A, \tau_i}(\tau_\A = c) \Pb_{\tau_i,\tau_\A}(A(\tau_\A, \tau_i) = \mathtt{forgery} |\tau_\A = c).
	\end{align}
	Trivially we have that $\Pb_{\tau_i,\tau_\A}(A(\tau_\A, \tau_i) = \mathtt{forgery}|\tau_\A = b) < 1,$ and from the definition of good tape in Equation \ref{defbuono} we get:
	\begin{equation}
	 \Pb_{\tau_i,\tau_\A}(A(\tau_\A, \tau_i) = \mathtt{forgery}|\tau_\A = c) < \frac{\e}{2}.
	\end{equation}
	Now we want to solve for $x = \Pb_{\tau_\A, \tau_i} (\tau_\A = b)$, so we get:
	\begin{equation}
	 \e \le A < x \cdot 1 + (1 - x) \cdot \frac{\e}{2} = x \left(1 - \frac{\e}{2}\right) + \frac{\e}{2},
	\end{equation}
that leads us to the conclusion:
\begin{equation}
 x \ge \frac{\e - \frac{\e}{2}}{1 - \frac{\e}{2}} \ge \frac{\e}{2-\e} \ge \frac{\e}{2}.
\end{equation}
\end{proof}

From now on we will suppose that the adversary is running on a good random tape.

Now we describe the simulation for the key-generation protocol. The forger $\F$ receives from its challenger the public key $\mathcal{Y}_c=x\B$ for the centralised ECDSA, a public key $E_1$ for Paillier and a public key $\pk_3$ for the asymmetric encryption scheme.

The simulation proceeds rewinding $\A$ and repeating the following steps until the public key has been correctly generated (i.e. $\mathcal{Y} \ne \perp$), which happens if $\A$ sends a correct decommitment on behalf of $P_2$ also after the rewind of step~\ref{setup_rewind}:
\begin{enumerate}
	\item $\A$ computes and broadcasts $\mathsf{KGC}_2$ and $\mathsf{KGCS}_2$.
	\item $\F$ selects random values $u_1, \sigma_{3,1}$, computes $[\mathsf{KGC}_1,\mathsf{KGD}_1]=\Com(u_1\B)$, $[\mathsf{KGCS}_1,\mathsf{KGCS}_1]=\Com(\sigma_{3,1}\B)$and sends $\mathsf{KGC}_1, \mathsf{KGCS}_1$.
	\item Each player $P_i$ broadcasts its decommitments and the Feldman-VSS values. Let $u_i\B, \sigma_{3,i}\B$ be the values decommitted.
	\item Each player sends their Paillier public key $E_i$.
	\item\label{setup_rewind} at this point $\F$ computes $\hat{u}\B = \mathcal{Y}_c - u_2\B - 3\sigma_{3,1}\B + 2\sigma_{3,2}\B$ and rewinds $\A$ to its commitment of $\mathsf{KGC}_1$ (step 2). We remark that $\F$ does not know $\hat{u}$ but only $\hat{u}\B$.
	\item $\F$ sends $\hat{\mathsf{KGC}_1}$, the commitment corresponding to $\hat{u}\B$. The commitment for $\sigma_{3,1}\B$ remains the same.
	\item if $\A$ refuses to decommit then $u_i\B$ is set to $\perp$.
	\item $\F$ simulates the VSS (since $\F$ is not able to compute the random polynomial $f(x)$) as explained in \Cref{feldman}.
	\item $\F$ participates in the $ZK$, extracting the secret values from $\A$.
	\item the remaining steps remain the same. If $u_i=\perp$ for some $i$ the public key $y$ is set to $\perp$. It is important to note that $P_1$ does not know the value of $\sigma_{1,3}$ and therefore uses a random value to compute $\rec_{1,3}$.
\end{enumerate}

Now we prove that the simulation terminates in expected polynomial time, that it is indistinguishable from the real protocol and that it terminates with output $\mathcal{Y}_c$ except with negligible probability.

\begin{lem} \label{lemma32}
	The simulation terminates in expected polynomial time and it is indistinguishable from the real protocol.
\end{lem}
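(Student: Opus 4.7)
The plan is to handle the two claims independently: a standard rewinding analysis for the running-time bound, and a short chain of hybrids for indistinguishability, swapping one simulated component at a time. Concretely, fix a good tape $\tau_\A$: by \Cref{goodTape} we have $\Pb_{\tau_1}[\A \text{ forges}] \ge \e/2$, and forging requires $\A$ to complete key-generation, which in particular requires it to broadcast a valid decommitment of $\mathsf{KGC}_2$ and $\mathsf{KGCS}_2$. Hence $\A$ decommits correctly in Step~3 with probability at least $\e/2$. After the rewind, $\F$ replaces $\mathsf{KGC}_1 = \Com(u_1\B)$ by $\hat{\mathsf{KGC}}_1 = \Com(\hat u\B)$ with $\hat u\B = \mathcal{Y}_c - u_2\B - 3\sigma_{3,1}\B + 2\sigma_{3,2}\B$; by the hiding property of $(\Com, \Ver)$, any non-negligible change in $\A$'s decommitment probability between the original and rewound runs would yield a distinguisher for the commitment-hiding game, so that probability remains in the range $\e/2 \pm \mathrm{negl}$. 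Each attempt, with fresh randomness for $\F$, therefore succeeds independently with probability $\ge \e/4$, so the expected number of iterations is $O(1/\e)$, which is polynomial in $\lambda$ since $\e \ge \lambda^{-c}$.

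For indistinguishability I chain hybrids $H_0, \dots, H_4$ between the real protocol and the simulation. In $H_0$ everything is executed honestly. $H_1$ replaces the Schnorr proof of knowledge of $x_1$ and the integer-factorization proof for $N_1$ by their ZK simulators; the switch is indistinguishable by the zero-knowledge property of both subprotocols. $H_2$ replaces the Feldman-VSS contribution of $P_1$ by the $(2,2)$-threshold simulator of \Cref{feldman}, with $c_1, c_2$ defined by \eqref{ECVSS}: the verification equations hold identically and the distributed shards are uniformly random subject to the single public constraint, so the rewrite is perfect. $H_3$ replaces $\mathsf{KGC}_1$ by $\hat{\mathsf{KGC}}_1$, again indistinguishable by the hiding of $(\Com,\Ver)$. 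Finally, $H_4$ replaces the ciphertext $\rec_{1,3}$ by an encryption under $\pk_3$ of a uniformly random pair; indistinguishability follows from the IND-CPA security of the scheme used by $P_3$, exploiting that $\A$ does not hold $\sk_3$. Hybrid $H_4$ exactly matches $\A$'s view in the simulation, so the two distributions are computationally close.

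The main obstacle is accounting for the rewinding carefully: $\A$'s tape is fixed across attempts while $\F$ resamples its own randomness, so the hiding-based switch at the rewind has to be argued in a slightly non-standard setting. The usual reduction handles it: a distinguisher between the pre- and post-rewind commitments yields a hiding attacker that embeds its challenge into whichever iteration is being analysed, and the argument goes through because only $\F$'s randomness varies between the two executions. A secondary subtlety is ensuring that the VSS simulator of \Cref{feldman} is consistent with the later extraction of $x_1$ via the Schnorr simulator and the usage of $\hat u\B$ in place of $u_1 \B$; this is automatic since $\F$ never commits to a specific scalar $u_1$ at any hybrid past $H_2$.
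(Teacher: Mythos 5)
Your overall structure matches the paper's: the running-time bound comes from the good-tape condition (\Cref{goodTape}) giving a decommitment probability of at least $\e/2$ per attempt, hence $O(1/\e)<2\lambda^c$ expected rewinds, and indistinguishability is argued component-by-component. Your hybrid chain is in fact more complete than the paper's own proof, which only discusses the fake Feldman-VSS (perfect, by the identical distribution of \Cref{feldman}) and the random $\rec_{1,3}$ (IND-CPA); you correctly add the simulated Schnorr and integer-factorization proofs, which $\F$ genuinely needs since it knows neither $x_1$ (its own VSS shard depends on the unknown $\hat u$) nor the factorization of the externally supplied $N_1$. Those two extra hybrids are a real improvement in rigor, not a deviation.

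There is, however, one step whose justification does not work as stated: hybrid $H_3$. You replace $\mathsf{KGC}_1=\Com(u_1\B)$ by $\Com(\hat u\B)$ and invoke the \emph{hiding} property, but in the final transcript the commitment is also \emph{opened}, so the distinguisher sees $u_1\B$ versus $\hat u\B$ in the clear and hiding is irrelevant to that comparison. (Hiding is the right tool only for the narrower claim that $\A$'s decision to decommit, made before seeing $P_1$'s opening under the non-rushing assumption of Observation~\ref{rush}, changes at most negligibly --- which is how you use it in the running-time part, where it is fine.) The correct argument for $H_3$ is distributional: $\hat u\B=\mathcal{Y}_c-u_2\B-3\sigma_{3,1}\B+2\sigma_{3,2}\B$ is uniform because $\mathcal{Y}_c$ is, hence identically distributed to a real $u_1\B$; the paper places exactly this observation in the proof of \Cref{lemma33} rather than here. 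Substituting that argument for the hiding appeal closes the gap; everything else in your proposal stands.
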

\begin{proof}
	Since $\A$ is running on a good random tape we know that it will correctly decommit with probability at least $\frac{\e}{2}$, then we need to rewind only a polynomial number of times, since the expected number of iterations we need to perform is $\frac{2}{\e} < 2 \lambda^c$.
	The first difference between the real protocol and the simulated one is that $\F$ does not know the discrete logarithm of $\hat{u}\B$ and so it needs to perform a ``fake'' Feldman-VSS. This is indistinguishable from a real Feldman-VSS since they have both the same distribution, as shown in \Cref{feldman}.
	The other difference is that $\F$ does not know the value of $\sigma_{1,3}$ and therefore uses a random value to compute $\rec_{1,3}$, but since the encryption algorithm is IND-CPA for the adversary this ciphertext is indistinguishable from the real one.
\end{proof}
\begin{lem} \label{lemma33}
	For a polynomially large fraction of possible values for the input $\mathcal{Y}_c$ the simulation terminates with output $\mathcal{Y}_c$ except with negligible probability.
\end{lem}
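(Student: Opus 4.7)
The plan is to establish two facts: (i) whenever a loop iteration of the simulation completes successfully, its output public key is necessarily $\mathcal{Y}_c$, and (ii) for a polynomially large fraction of inputs $\mathcal{Y}_c$, some loop iteration completes with overwhelming probability within polynomially many attempts.

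For (i), by construction $\F$ picks $\hat u\B := \mathcal{Y}_c - u_2\B - 3\sigma_{3,1}\B + 2\sigma_{3,2}\B$, using the values $u_2\B,\sigma_{3,2}\B$ extracted from $\A$'s first round of decommitments. By the binding property of $(\Com,\Ver)$, except with negligible probability $\A$ cannot open $\mathsf{KGC}_2$ and $\mathsf{KGCS}_2$ to different values after the rewind; hence whenever the rewound round completes, $\mathcal{Y}=\hat u\B + u_2\B + 3\sigma_{3,1}\B - 2\sigma_{3,2}\B = \mathcal{Y}_c$ identically.

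For (ii), fix a good tape $\tau_\A$ and let $p(\mathcal{Y}_c)$ denote the probability (over $\F$'s coins and the freshly sampled protocol randomness) that a single iteration of the outer loop terminates successfully, i.e.\ that $\A$ decommits both pre- and post-rewind. When $\mathcal{Y}_c$ is drawn uniformly on the curve, $\hat u\B$ is uniformly distributed and therefore the simulated commitment $\hat{\mathsf{KGC}_1}$ is computationally indistinguishable from a honestly generated $\mathsf{KGC}_1$ (via the hiding property of $\Com$); combined with the indistinguishability of the simulated Feldman-VSS, the Schnorr and factorization ZK simulators, and the IND-CPA ciphertext $\rec_{1,3}$ already invoked in \Cref{lemma32}, $\A$'s view in both the pre- and the rewound execution is computationally indistinguishable from an honest interaction. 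The good-tape condition (Equation~\ref{defbuono}) then forces $\A$ to complete each such execution with probability at least $\e/2 - \mathsf{negl}(\lambda)$, so combining the pre- and post-rewind events yields $\mathbb{E}_{\mathcal{Y}_c}[p(\mathcal{Y}_c)] \geq (\e/2)^2 - \mathsf{negl}(\lambda)$. A reverse Markov inequality (using $p\le 1$) then gives
\[
\Pb_{\mathcal{Y}_c}\!\left[p(\mathcal{Y}_c) \geq \tfrac{\e^2}{8}\right] \;\geq\; \tfrac{\e^2}{8} - \mathsf{negl}(\lambda),
\]
and for any such $\mathcal{Y}_c$, running the outer loop for $8\lambda/\e^2$ iterations (still polynomial, matching the expected bound of \Cref{lemma32}) fails with probability at most $(1-\e^2/8)^{8\lambda/\e^2} \leq e^{-\lambda}$, which is negligible.

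The main obstacle lies in the indistinguishability claim underlying the lower bound on $\mathbb{E}_{\mathcal{Y}_c}[p(\mathcal{Y}_c)]$: one needs $\A$'s post-rewind decommitment behavior to be essentially oblivious to the fact that $\F$ has swapped a fresh random $u_1\B$ for the arithmetically-determined $\hat u\B$. This relies crucially on the \emph{non-malleability} of $\Com$ (assumed in \Cref{theorem_security}) so that $\A$'s own commitments $\mathsf{KGC}_2, \mathsf{KGCS}_2$ cannot adaptively correlate with $\F$'s substituted commitment; absent non-malleability, $\A$ could in principle bias its decommitment probability against the rewound value and derail the averaging argument that identifies the polynomial fraction of good $\mathcal{Y}_c$.
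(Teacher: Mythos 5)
Your proposal is correct and follows the same two-part decomposition as the paper: first, that any successful completion necessarily outputs $\mathcal{Y}_c$ because $\A$'s commitments are fixed before the rewind point and cannot be opened differently afterwards (the paper phrases this via non-malleability, you via binding; both are defensible since $\mathsf{KGC}_2,\mathsf{KGCS}_2$ are sent before $\F$'s substituted commitment); second, that the uniformity of $\hat u\B = \mathcal{Y}_c - u_2\B - 3\sigma_{3,1}\B + 2\sigma_{3,2}\B$ and its bijective correspondence with $\mathcal{Y}_c$ transfer the good-tape guarantee to a polynomial fraction of inputs. The one genuine divergence is quantitative: the paper argues directly that $\A$ decommits for an $\e/2$ fraction of the (uniformly distributed) values of $\hat u\B$, hence of $\mathcal{Y}_c$, and this $\e/2$ is the constant later plugged into the $\frac{\e^3}{8Q}$ bound in the proof of \Cref{theorem_security}; you instead lower-bound the expected per-iteration success probability by $(\e/2)^2$ (accounting explicitly for needing both the pre- and post-rewind decommitments, which the paper elides) and apply a reverse Markov inequality, landing on a fraction $\e^2/8$. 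Your accounting is arguably more careful --- the joint pre/post-rewind success is indeed what the simulator needs, and your bound can be justified via conditional independence of the two decommitment events given the prefix plus Jensen --- but it is weaker by a factor of $\e$, which would propagate into the final success probability of $\F$ (still polynomial in $\e$ and $1/\lambda$, so the theorem is unaffected). Both arguments share the same residual hand-waving, namely that the good-tape condition of \Cref{defbuono}, which concerns forgeries over the honest player's full tape, is being read as a statement about decommitment frequency as a function of $\hat u\B$ alone.
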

\begin{proof}
	First we prove that if the simulation terminates correctly (i.e. with output which is not $\perp$) then it terminates with $\mathcal{Y}_c$ except with negligible probability.
	
	This is a consequence of the non-malleability property of the commitment scheme. Indeed, if $\A$ correctly decommits $\mathsf{KGC}_2$ twice it must do so to the same string, no matter what $P_1$ decommits to (except with negligible probability). Therefore, due to our choice for $\hat{u}\B$ we have that the output is exactly $\mathcal{Y}_c$.
	
	Now we prove that the simulation ends correctly for a polynomially large fractions of the inputs.
	
	Since $\A$ is running on a good random tape it decommits correctly for at least a fraction $\frac{\e}{2}>\frac{1}{2\lambda^c}$ of the possible values of $\hat{u}\B$.
	Moreover, since $\mathcal{Y}_c$ and $ \sigma_{3,1}$ are chosen uniformly at random, and $y_2$ and $\sigma_{3,2}$ are chosen by $\A$ without the knowledge of the other values, we can conclude that ${\hat{u}\B=\mathcal{Y}_c - u_2\B - 3\sigma_{3,1} + 2\sigma_{3,2}}$ (that is fully determined before the rewind) has also uniform distribution.
	Then given the 1-to-1 correspondence between $\mathcal{Y}_c$ and $\hat{u}\B$ we can conclude that for a fraction $\frac{\e}{2}>\frac{1}{2\lambda^c}$ of the inputs the protocol will correctly terminate.
\end{proof}
\begin{oss} \label{rush}
	In the simulation it is crucial that the adversary broadcasts $\mathsf{KGC}_2$ and $\mathsf{KGCS}_2$ before $\F$. Inverting the order will cause this simulation to fail, since after the rewind $\A$ could change its commitment. Due to the non-malleability property we are assured that $\A$ can not deduce anything about the content of these commitments, but nevertheless it could use it as a seed for the random generation of its values. In this case $\F$ guesses the right $\hat{u}\B$ only with probability $\frac{1}{q}$ where $q$ is the size of the group, so the expected time is exponential.

	It is possible to swap the order in the first step using an equivocable commitment scheme with a secret trapdoor. In this case we only need to rewind at the decommitment step, we change $\mathsf{KCD}_1$ in order to match $\hat{u}\B$. In this way we could prove the security of the protocol also in the presence of a \emph{rushing adversary} but we need an additional hypothesis regarding the commitment scheme.
\end{oss}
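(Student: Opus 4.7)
My plan is to justify this observation in two parts, following the structure of the statement itself: first explaining why inverting the commitment order breaks the simulation of \Cref{lemma33}, and then sketching how an equivocable commitment scheme restores the argument for rushing adversaries.

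For the first part, I would start by recalling why the current ordering works. Because $\A$ broadcasts $\mathsf{KGC}_2$ and $\mathsf{KGCS}_2$ \emph{before} seeing $\F$'s commitments, the value $u_2\B$ that $\A$ eventually decommits to is a function only of $\A$'s tape and its prior view, and is in particular independent of whether $\F$'s commitment in step~2 opens to $u_1\B$ or to $\hat{u}\B$. This is exactly what lets $\F$, after the rewind in step~\ref{setup_rewind}, set $\hat{u}\B = \mathcal{Y}_c - u_2\B - 3\sigma_{3,1}\B + 2\sigma_{3,2}\B$ and force the joint public key to equal $\mathcal{Y}_c$ whenever $\A$ decommits (which happens with probability at least $\e/2$ by \Cref{goodTape}).

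Now suppose the order is inverted, so that $\F$ sends $\mathsf{KGC}_1, \mathsf{KGCS}_1$ first. I would argue that after the rewind $\F$ must present a \emph{different} commitment $\hat{\mathsf{KGC}_1}$, and $\A$ is then free to generate a fresh $\mathsf{KGC}_2$ that opens to some new $u_2'\B$. Non-malleability of $(\Com, \Ver)$ prevents $\A$ from producing a commitment whose plaintext is a meaningful function of $\hat{u}\B$, but it does not prevent $\A$ from using the bit-string $\hat{\mathsf{KGC}_1}$ as a seed for a pseudorandom sampling of $u_2'\B$ (and analogously for $\sigma_{3,2}'\B$). The required identity $\hat{u}\B + u_2'\B + 3\sigma_{3,1}\B - 2\sigma_{3,2}'\B = \mathcal{Y}_c$ now involves a quantity $u_2'\B$ that, from $\F$'s vantage, is uniformly distributed in the group and independent of $\mathcal{Y}_c$ (since $\mathcal{Y}_c$ is fixed before the rewind and $\F$ has no control over $\A$'s fresh randomness). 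Hence the event that the simulation terminates with output $\mathcal{Y}_c$ has probability at most $1/q$ per rewind, and the expected number of rewinds becomes $\Omega(q)$, which is exponential in the security parameter. This formalises the informal ``seed'' argument in the observation.

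For the second part, I would describe the alternative simulation using an equivocable commitment scheme with trapdoor. Here $\F$ generates $\mathsf{KGC}_1, \mathsf{KGCS}_1$ as equivocable commitments to arbitrary placeholders and sends them first; $\A$ then responds with $\mathsf{KGC}_2, \mathsf{KGCS}_2$. Crucially, $\A$'s commitment is now binding (up to negligible soundness error) while $\F$'s is not, so $\A$'s eventual decommitment $u_2\B$ is already fixed. At the decommitment step, $\F$ computes $\hat{u}\B = \mathcal{Y}_c - u_2\B - 3\sigma_{3,1}\B + 2\sigma_{3,2}\B$ and uses the trapdoor to produce a valid $\mathsf{KGD}_1$ opening $\mathsf{KGC}_1$ to $\hat{u}\B$. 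The rewind, if needed at all, is restricted to the decommitment phase, and the good-tape bound of \Cref{goodTape} carries over unchanged. The main subtlety to check — and the point where the extra hypothesis enters — is that the equivocated opening is statistically (or computationally) indistinguishable from an honest one in $\A$'s view, which is precisely the defining property of a trapdoor equivocable commitment; this is the additional assumption alluded to in the observation, and it is what makes the argument robust to rushing adversaries.
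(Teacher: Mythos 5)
Your proposal is correct and follows essentially the same reasoning as the paper's own (informal) justification of this observation: the seed-based dependence of $\A$'s fresh commitment on $\hat{\mathsf{KGC}_1}$ after the rewind yields success probability $1/q$ and hence exponential expected time, and the equivocable-commitment fix defers the rewind to the decommitment step under the additional trapdoor hypothesis. Your elaboration of why non-malleability does not rule out the seed attack, and of why $\A$'s binding commitment fixes $u_2\B$ before $\F$ must equivocate, is a faithful and slightly more detailed rendering of the paper's argument.
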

Now we describe the simulation of the protocol for the signature generation. In the same way Gennaro and Goldfeder did in~\cite{gennaro2018fast}, we need to distinguish two different types of executions depending on what happens during the step 3 in the signature generation algorithm:
\begin{enumerate}
	\item Semi-correct executions: the adversary has followed the protocol and committed-decommitted the correct $k_2$, then the equality $\ \mathcal{R}=k^{-1}\B$ holds, where $k=k_1+k_2$. In this case $\F$ is able to correctly determine the value $s_1 \mathcal{R}$ and therefore to correctly terminate the simulation.
	\item Non-semi-correct executions: the value decommitted by $\A$ is not the correct $k_2$, then the value $k^{-1}$ is not the proper one. We will show that a simulation that is not semi-correct will fail with high probability since the value $\mathcal{U}_1$ contributed by $P_1$ is indistinguishable from a random one. This allows us to simulate the protocol by simply using a random $\tilde{s}_1$ for $P_1$ instead of the correct one.
\end{enumerate}

We note that it is impossible to distinguish the two cases a priori, so the idea is to guess if an execution will be semi-correct or not. In the semi-correct executions the simulator will be able to extract the ``signature shard'' of $P_1$ and to terminate the simulation successfully, in the non-semi-correct ones our simulator will guess a random signature causing the protocol to abort (we will show that in the non-semi-correct execution the real protocol aborts with high probability).

We now present the simulation for a semi-correct execution. We recall that $\F$ does not know the secret values of $P_1$ ($\w_1$ and the secret key corresponding to its Paillier public key) but it knows the secret values of $P_2$. In the following simulation $\F$ aborts whenever the original protocol is supposed to abort.
\pagebreak[3]
\begin{enumerate}
	\item Both players pick randomly $k_i, \gamma_i$ and broadcast $\Delta_i$.
	\item Both players execute the $\mathsf{MtA}$ protocol for $k\gamma$. Since $P_1$ can not decrypt $\alpha_{1,2}$, $\F$ sets it at random.
	\item Both players execute the MtAwc protocol for $k\w$. From the ZKP $\F$ extracts $\nu_{1,2}$. Since $\F$ does not know $\w_1$ it sends a random $\mu_{2,1}$ to $P_2$. At this point $\F$ knows $\sigma_2$.
	\item Both players execute the protocol, revealing $\delta_i$ and setting ${\delta=\delta_1+\delta_2}$.
	\item Both players send $D_i$. 
	\item $\F$ queries its signature oracle and receives a signature $(r,s)$ for $m$. It computes $\mathcal{R}=ms^{-1}\B+rs^{-1}\mathcal{Y}$. We note that, in centralised ECDSA, we have $s\mathcal{R} = k (m + r u)\mathcal{R} = m\B + ru\B = m\B + r\mathcal{Y}$. Then the $\mathcal{R}$ we compute in our simulation is the correct value in the centralised algorithm.
	\item $\F$ rewinds $\A$ and changes its commitment to $\hat{\mathcal{G}_1} = \delta ^ {-1} \mathcal{R} - \mathcal{G}_2$. In this way we have that $\delta(\hat{\mathcal{G}_1} + \mathcal{G}_2) = \mathcal{R}$.
	In the steps after the new commitment $P_1$ uses the old values of $\gamma_1$ and $k_1$ since it does not know the correct ones.
	\item At this point $\F$ knows the value $s_2$ and then it can compute the right $s_1$ as $s_1 = s-s_2$. 
	\item $P_1$ and $P_2$ follow the remaining part of the protocol normally.
\end{enumerate}
\begin{oss}
	As in the case of the Enrollment Phase we need that $\A$ speaks first, in order to rewind to our commitment phase without changing $\A$'s random tape. The same argument about equivocable commitment schemes and rushing adversaries applies here as well.
\end{oss}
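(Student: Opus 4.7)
The plan is to justify both assertions of the observation by mirroring the analysis already carried out in Remark~\ref{rush} for the key-generation simulation, adapted to the commitments $\Delta_1,\Delta_2$ exchanged in step~1 of the signature algorithm. The forger $\F$'s strategy requires rewinding $\A$ back to just before $\A$ decommits $\Delta_2$, while keeping $\A$'s random tape intact so that the continuation is statistically close to the original run (on which $\A$ is known to succeed with probability at least $\e/2$ by the good-tape hypothesis and~\Cref{goodTape}).

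First I would analyse the consequence of inverting the scheduling, namely that $\F$ sends $\Delta_1$ before $\A$ sends $\Delta_2$. By non-malleability $\A$ learns nothing about $\gamma_1$, but nothing forbids $\A$ from using the bitstring $\Delta_1$ itself as part of the randomness it feeds into the generation of $\gamma_2$ and $\Delta_2$. After $\F$ queries its signing oracle and obtains $(r,s)$, it must rewind to its own commitment and resend a new $\hat\Delta_1$ corresponding to $\hat{\mathcal{G}}_1 = \delta^{-1}\mathcal{R} - \mathcal{G}_2$. The continuation of $\A$, run on the same tape but on the new transcript prefix, will in general decommit to a different $\mathcal{G}_2'$, so the identity $\delta(\hat{\mathcal{G}}_1 + \mathcal{G}_2') = \mathcal{R}$ fails. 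The only way for $\F$ to avoid this is to guess $\hat{\mathcal{G}}_1$ correctly before the rewind, which succeeds with probability $1/q$, making the expected running time exponential and contradicting the goal of a polynomial-time reduction. This establishes the first assertion.

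For the second assertion I would replace $(\Com,\Ver)$ with an equivocable commitment scheme admitting a trapdoor. Now $\F$ can publish a well-formed $\Delta_1$ first (before seeing $\Delta_2$); by the hiding property $\A$ cannot distinguish this from an honest commitment, so the distribution of $\Delta_2$ is unaffected. After obtaining $(r,s)$ from the oracle and the decommitted $\mathcal{G}_2$, $\F$ computes $\hat{\mathcal{G}}_1$ and exploits the trapdoor to produce a decommitment $\hat D_1$ of $\Delta_1$ opening to $\hat{\mathcal{G}}_1$. Crucially the rewind now happens only at the decommitment step, which does not touch $\A$'s commitment phase, and hence $\mathcal{G}_2$ stays fixed across the rewind; the equivocation property yields statistical closeness between simulated and real transcripts, so the remainder of the reduction goes through against a rushing adversary, at the cost of the extra assumption on $(\Com,\Ver)$.

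The main obstacle will be arguing rigorously the subtle point about $\A$'s dependence on $\Delta_1$: even granting non-malleability and hiding, a rushing adversary is entitled to inspect $\Delta_1$ as a bitstring and branch on it arbitrarily, and this is exactly what prevents the reduction from rewinding cleanly under an ordinary commitment. Isolating where this dependence is exploited, and checking that the equivocable variant intervenes precisely at that point, is the heart of the justification.
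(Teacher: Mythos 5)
Your proposal is correct and follows essentially the same route as the paper: the paper justifies this observation purely by reference to Observation~\ref{rush}, i.e.\ the adversary may use $\F$'s commitment as a seed for its own randomness, so after the rewind $\A$'s decommitted value could change and $\F$ would have to guess $\hat{\mathcal{G}}_1$ with probability $1/q$, while an equivocable commitment lets $\F$ rewind only at the decommitment step and handle a rushing adversary. Your transplantation of that argument to the commitments $\Delta_1,\Delta_2$ and the rewind at step~7 of the semi-correct simulation is exactly the intended reading.
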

\begin{lem}
	Assuming that
	\begin{itemize}
		\item The Strong RSA Assumption holds,
		\item we use a non malleable commitment scheme,
	\end{itemize}
then the simulation has the following properties:
\begin{itemize}
	\item on input $m$ it outputs a valid signature $(r,s)$ or aborts,
	\item it is computationally indistinguishable from a semi-correct real execution.
\end{itemize}
\end{lem}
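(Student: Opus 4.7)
The plan is to prove the two claims in turn, both of which reduce to careful bookkeeping of the messages that the forger $\F$ sends to $\A$ and a comparison with the messages an honest $P_1$ would have sent in a real semi-correct execution.

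\textbf{Validity of the output.} For the first property, the central observation is that after the rewind the simulator sets $\hat{\mathcal{G}_1} = \delta^{-1}\mathcal{R} - \mathcal{G}_2$, so that $\delta(\hat{\mathcal{G}_1} + \mathcal{G}_2) = \mathcal{R}$, i.e.\ both players reconstruct precisely the $\mathcal{R} = m s^{-1}\B + r s^{-1}\mathcal{Y}$ recovered from the centralised signing oracle. Because $\F$ subsequently fixes $s_1 = s - s_2$, the summed value $s_1 + s_2 = s$ satisfies $s\mathcal{R} = m\B + r\mathcal{Y}$, which forces $\mathcal{W} = (l_1 + l_2)\B$ and makes the consistency check $\mathcal{T}_1 + \mathcal{T}_2 = \mathcal{U}_1 + \mathcal{U}_2$ hold automatically. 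Hence whenever the simulation does not abort, the output is the oracle-supplied pair $(r,s)$, which is a valid ECDSA signature.

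\textbf{Indistinguishability.} For the second claim the plan is a standard hybrid argument that successively replaces real values sent by $P_1$ with the simulator's substitutes: (i) replace the Schnorr proofs of knowledge of $\gamma_1$, $s_1$, $l_1$, $\rho_1$ by their ZK simulators (since $\F$ no longer knows the discrete logs after the rewind); (ii) replace the range proofs inside the MtA/MtAwc calls by their simulators (whose ZK property relies on the strong RSA assumption through the Pedersen-style commitments); (iii) replace the outgoing MtAwc mask $\mu_{2,1}$ by a uniformly random element of $\Z_q$, which is statistically close to the real value because the corresponding $\nu_{1,2}$ is uniform in $\Z_N$ with $N \gg q$; (iv) replace the commitment-decommitment pair for $\mathcal{G}_1$ with the rewound pair opening to $\hat{\mathcal{G}_1}$. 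In step (iv), non-malleability of $\Com$ ensures that $\A$ must decommit $\Delta_2$ to the same $\mathcal{G}_2$ on both branches, so $\mathcal{G}_2$ is a fixed target when $\F$ computes $\hat{\mathcal{G}_1}$, and hiding guarantees $\A$'s view before decommitment is independent of the content of $P_1$'s commitment.

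A short verification shows the remaining marginals match. In the real protocol $\alpha_{1,2}$ equals $k_1\gamma_2 + \beta' \bmod q$ for $\beta'$ uniform in $\Z_N$, hence statistically uniform in $\Z_q$; the simulator's random $\alpha_{1,2}$ therefore induces the same distribution on $\delta_1$. Likewise, the identity $\sigma_1 + \sigma_2 = ku$ in the real protocol implies $s_1 + s_2 = mk + rku = s$, so $s_1 = s - s_2$ coincides in distribution with the honest $s_1 = m k_1 + r\sigma_1$ conditioned on $(r,s)$ and $\A$'s view.

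\textbf{Main obstacle.} The delicate part is the rewinding: we must argue that $\F$ can re-synchronise the commitment round without $\A$ detecting the swap and without running for super-polynomial time. This depends jointly on the non-malleability of $\Com$ (to pin down $\mathcal{G}_2$ across the two branches) and on the good-tape analysis of \Cref{goodTape} (to keep the expected number of rewinds polynomial). If $\A$ were rushing, or the commitment were malleable, $\A$ could react to $\hat{\Delta}_1$ and bias its own decommitment, breaking both the reconstruction of $\mathcal{R}$ and the indistinguishability argument; as in the key-generation simulation, the fix would be to move to equivocable commitments, at the cost of an extra assumption.
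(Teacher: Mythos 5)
Your argument for the first property (validity of the output) is fine and essentially matches the paper's intent, and several of your hybrid steps (ZK simulation, the statistical uniformity of the MtA/MtAwc shares masked by $\beta'\in\Z_N$, the role of non-malleability in pinning down $\mathcal{G}_2$ across the rewind) are correct and appear in the paper's proof in some form. However, there is a genuine gap in your indistinguishability argument, and it sits exactly where the lemma's Strong RSA hypothesis is actually needed. Your final hybrid (iv) swaps $P_1$'s commitment so that it opens to $\hat{\mathcal{G}_1}=\delta^{-1}\mathcal{R}-\mathcal{G}_2$, and you justify this by hiding and non-malleability alone. But hiding only covers the adversary's view \emph{before} decommitment: after decommitment $\A$ sees $\hat{\mathcal{G}_1}$ and the derived $\mathcal{R}$ in the clear, and these are inconsistent with the values $\gamma_1,k_1$ that the simulator actually fed into the MtA/MtAwc instances (the paper's simulation explicitly keeps the old $\gamma_1,k_1$ there). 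In particular $\A$ holds the Paillier ciphertext $E_1(k_1)$ from the MtA initiation, and a distinguisher could try to test whether the plaintext of that ciphertext is consistent with $\mathcal{R}=(\hat k_1+k_2)^{-1}\B$, i.e.\ with $\hat k_1=\log_{\mathcal{R}}(\B-k_2\mathcal{R})$, and likewise whether $\mathcal{G}_1$ equals $\gamma_1\B$ for the $\gamma_1$ used inside the MtA. Ruling out this attack is precisely the reduction to the semantic security of Paillier encryption, which the paper derives from the Strong RSA assumption; this is the core of the paper's proof of the second bullet and it is absent from yours. You do cite Strong RSA, but only for simulating the range proofs' auxiliary commitments over the modulus $M$, which is a secondary (and different) use; without the Paillier semantic-security step, hybrid (iv) is not justified and the chain of hybrids does not close.

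A smaller remark: the paper also makes the (slightly counter-intuitive) observation that the simulated execution is formally a \emph{non}-semi-correct one being passed off as semi-correct, and that this is acceptable only because of the above indistinguishability; your write-up implicitly relies on the same fact when you let $P_1$ reuse the old $k_1,\gamma_1$ after the rewind, so it would be worth stating. Apart from the missing Paillier reduction, your decomposition (explicit hybrids, marginal-matching for $\alpha_{1,2}$, $\mu_{2,1}$, and $s_1=s-s_2$) is a reasonable and somewhat more systematic presentation of the same underlying argument.
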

\begin{proof}
	The differences between the real and the simulated views are the following:
	\begin{itemize}
		\item $P_1$ does not know the discrete logarithm of $\mathcal{G}_1$. Moreover $\mathcal{G}_1 \ne \gamma_1 \B$.
		\item in the real protocol $\mathcal{R} = (k_1+k_2)^{-1} \B$, in this simulation $\mathcal{R}$ is chosen by the signing oracle.	
	\end{itemize}
We have that $c_1 = E(\gamma_1)$ is sent during the MtA protocol. In order to distinguish between a real execution and a simulated one an adversary should detect if $c_1$ is the encryption of a random plaintext or if it is the encryption of $\log_\B(\mathcal{G}_1)$. The strong RSA assumption assures the semantic security of Paillier's encryption, so this is infeasible.

In the same way let $e_1 = E(k_1)$. We can write $\mathcal{R}=(\hat{k}_1+k_2)^{-1}\B$, where we could imagine $\hat{k}_1$ sent by a random oracle (in reality we never calculate $\hat{k}_1$ in the simulation because we compute directly $\mathcal{R}$). Then $(\hat{k}_1+k_2) \mathcal{R} = \B$, so $\hat{k}_1\mathcal{R} = \B -k_2\mathcal{R}$ and $\hat{k}_1=\log_\mathcal{R}(\B-k_2\mathcal{R})$. With the same argument as before we can conclude that the simulation is indistinguishable from the real execution.
It is worth noting that we are simulating a semi-correct execution with a non-semi-correct one, but since they are indistinguishable this is fine.

Now let $(r,s)$ be the signature that $\F$ receives by its oracle in the sixth point of the protocol.
Note that the change of the commitment after the rewind does not change the view for $\A$ given the hiding and non-malleabilty properties of the commitment scheme and the considerations above, so as a consequence of the non-malleability property of the commitment scheme the decommitment is consistent and we have that if the protocol terminates it does so with output $(r,s)$.
\end{proof}

Now we show how to simulate the protocol for a non semi-correct execution, i.e. when the value decommitted by $\A$ is not the real $k_2$.
\begin{enumerate}
	\item the simulator runs the semi-correct simulation from the first to the sixth point.
	\item $\F$ does not rewind $\A$ to fix the value of $\mathcal{R}$. Instead it runs the protocol normally.
	\item $\F$ chooses $\tilde{s_1} \in \mathbb{Z}_q$ and $\mathcal{U}_1$ at random and uses these values in the last part of the protocol.
\end{enumerate}
The only difference between the semi-correct simulation and this one is the choice of $s_1$ and $\mathcal{U}_1$. The reason is that in the semi-correct simulation $\F$ can arbitrarily fix $\mathcal{R}$ because the values decommitted by $\A$ are the real ones, instead in the non-semi-correct simulation this is impossible since the value $k_1$ does not match anymore. Therefore $\F$ tries to make the protocol fail choosing $\mathcal{U}_1$ and $s_1$ at random.

We divide the proof in two different steps: first we prove that a real non-semi-correct execution is indistinguishable from a simulation in which $P_1$ uses the right $s_1$ but outputs a random $\mathcal{U}_1$ and then we prove that this second intermediate simulation is indistinguishable from the one described above, with both $s_1$ and $\mathcal{U}_1$ chosen randomly.
\begin{lem}
	Assuming that 
	\begin{itemize}
		\item the DDH Assumption holds,
		\item $\Com, \Ver$ is a non malleable commitment scheme,
	\end{itemize}
then the simulation is computationally indistinguishable from a non-semi-correct real execution.
\end{lem}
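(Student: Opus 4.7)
The plan is to establish the claim by a standard two-hop hybrid argument, interposing an auxiliary distribution $H_2$ between the real non-semi-correct execution $H_1$ and the described simulation $H_3$ (in which $\F$ picks both $\tilde{s}_1$ and $\mathcal{U}_1$ at random). In $H_2$, $P_1$ proceeds honestly through the computation of $\mathcal{W}_1 = s_1\mathcal{R} + l_1\B$ with the real $s_1$ and uniform $l_1$, but at the last stage $\mathcal{U}_1$ is replaced by a uniformly random point of the curve. I then need $H_1 \approx H_2$ (the DDH step) and $H_2 \approx H_3$ (a statistical/zero-knowledge argument), from which the conclusion follows by transitivity.

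For $H_1 \approx H_2$, the key observation is that in a non-semi-correct execution $\mathcal{W} = -m\B - r\mathcal{Y} + \mathcal{W}_1 + \mathcal{W}_2$ has a non-trivial $\mathcal{R}$-component, and the scalar $\log_\B \mathcal{R}$ depends on $P_1$'s secret $k_1$ and is unavailable to $\A$. Given a DDH challenge $(\B, A, B, C)$, $\F$ embeds $\mathcal{Z}_1 := A$ (so implicitly $\rho_1 = \log_\B A$), forces $\mathcal{W} = B$ by defining $\mathcal{W}_1 := B + m\B + r\mathcal{Y} - \mathcal{W}_2$, and sets $\mathcal{U}_1 := C$. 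When $C$ is the true DDH response the transcript is distributed as $H_1$; when $C$ is uniform it is distributed as $H_2$, so any distinguisher for $H_1$ versus $H_2$ yields a DDH distinguisher. The Schnorr-style ZKP for $(s_1, l_1, \rho_1)$ must be handled by the ZK simulator, since $\F$ knows neither $\rho_1$ nor the $l_1$ consistent with the embedded $\mathcal{W}_1$ (knowing that $l_1$ would require $\log_\B \mathcal{R}$). The timing difficulty, namely that $\F$ has to commit to $\mathcal{W}_1$ before seeing the decommitment to $\mathcal{W}_2$, is resolved by a rewind: $\F$ first commits to a placeholder, observes $\A$'s decommitment, rewinds to the commitment step, and replaces its commitment with that of the DDH-consistent $\mathcal{W}_1$; the non-malleability of $(\Com, \Ver)$ guarantees that $\A$'s commitment to $\mathcal{W}_2$ is not adapted to $\F$'s modified message, so the challenge is embedded faithfully.

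For $H_2 \approx H_3$, the argument is essentially statistical. The only differences are the scalar used to form $\mathcal{W}_1$ and the witness supplied to the ZKP. Since $l_1$ is uniform in $\Z_q$ and hidden from $\A$, $\mathcal{W}_1 = s_1\mathcal{R} + l_1\B$ is uniformly distributed in the subgroup independently of whether the scalar is the real $s_1$ or a fresh $\tilde{s}_1$; the distributions of all messages through the decommitment of $(\mathcal{W}_1, \mathcal{Z}_1)$ are therefore identical. The ZKP is again replaced by its simulator. Finally, the value of $s_1$ itself is never broadcast: $\mathcal{U}_1$ is uniformly random, so with overwhelming probability the check $\mathcal{T}_1 + \mathcal{T}_2 = \mathcal{U}_1 + \mathcal{U}_2$ fails and the protocol aborts in step 4.l of the signature algorithm, before the broadcast of $s_i$ in step 4.n.

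The principal obstacle will be the clean embedding of the DDH challenge in the first hop. Fixing $\mathcal{W}_1$ to force $\mathcal{W} = B$ requires knowledge of $\mathcal{W}_2$ that is only revealed after the commitments are exchanged, and the ZKP on $(s_1, l_1, \rho_1)$ must be convincingly simulated despite $\F$'s ignorance of both $\rho_1 = \log_\B A$ and the arithmetically implied $l_1$. Non-malleability is precisely what prevents $\A$ from adapting its commitment to $\mathcal{W}_2$ across the rewind, and the zero-knowledge property of the knowledge proof is what keeps the simulated view consistent; both hypotheses in the lemma are thus used in an essential way.
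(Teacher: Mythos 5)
Your hybrid decomposition ($H_1$: real, $H_2$: real $s_1$ but random $\mathcal{U}_1$, $H_3$: both random) is exactly the paper's $G_0,G_1,G_2$, but both of your hops have genuine problems. In the first hop you embed the DDH challenge by setting $\mathcal{Z}_1=A$, forcing $\mathcal{W}=B$ through $\mathcal{W}_1:=B+m\B+r\mathcal{Y}-\mathcal{W}_2$, and outputting $\mathcal{U}_1:=C$. This makes $l_1$ only implicitly defined (as $\log_\B(\mathcal{W}_1-s_1\mathcal{R})$), yet the reduction must still output $\mathcal{T}_1=l_1\mathcal{Z}$ with $\mathcal{Z}=\mathcal{Z}_1+\mathcal{Z}_2=A+\rho_2\B$; the term $l_1\cdot A$ is then a CDH computation on two group elements whose discrete logs you do not know, so the reduction cannot produce $\mathcal{T}_1$ and the simulation breaks before the point where the challenge is supposed to pay off. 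Note also that your embedding never uses the non-semi-correctness hypothesis. The paper avoids both issues by keeping $l_1$ and $\rho_1=a$ arranged so that $l_1$ is a known scalar (hence $\mathcal{T}_1=l_1\mathcal{Z}$ is computable), planting $b$ into the \emph{public key} $\mathcal{Y}=b\B$ at key generation, and observing that in a non-semi-correct execution $\mathcal{W}=l\B+t_1\B+t_2\tilde{\mathcal{B}}$ with $t_2=k^{-1}\tilde{k}r-r\ne 0$ automatically, so that $\mathcal{U}_1=a\mathcal{W}$ can be assembled from $\tilde{\mathcal{A}}$ and $\tilde{\mathcal{C}}$ without ever forcing $\mathcal{W}$.

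The second hop is where your argument is most clearly wrong: you claim $H_2\approx H_3$ is ``essentially statistical'' because $l_1\B$ masks $s_1\mathcal{R}$ in $\mathcal{W}_1$. But $\mathcal{T}_1=l_1\mathcal{Z}$ is decommitted \emph{before} the abort at step 4.l, and the triple $(\mathcal{W}_1,\mathcal{Z},\mathcal{T}_1)$ information-theoretically determines $l_1$ and hence $s_1=\log_{\mathcal{R}}(\mathcal{W}_1-l_1\B)$; an unbounded adversary distinguishes $H_2$ from $H_3$ with certainty. This hop genuinely needs a computational assumption, and the paper supplies a second DDH reduction: it sets $\tilde{\mathcal{A}}=(a-d)\B$, $\mathcal{W}_1=s_1\mathcal{R}+\tilde{\mathcal{A}}$, $\mathcal{Z}_1=\tilde{\mathcal{B}}$, $\mathcal{T}_1=\tilde{\mathcal{C}}+\rho_2\tilde{\mathcal{A}}$, so that $d=0$ yields $G_1$ and random $d$ yields $G_2$ with $\tilde{s}_1=s_1-dk^{-1}$. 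You would need to replace your statistical claim with a reduction of this kind for the lemma to hold as stated.
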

\begin{proof}
	As anticipated we construct three games between $\F$ and $\A$. In the first one, $G_0$, the simulator will simply run the real protocol. In the game $G_1$, $\F$ follows the real protocol but chooses $\mathcal{U}_1$ randomly. Finally, in $G_2$, $\F$ runs the simulation previously described, with both $s_1$ and $\mathcal{U}_1$ chosen at random. Now we proceed to prove the indistinguishability of $G_0$ and $G_1$ and then of $G_1$ and $G_2$.
	\\
	
	Let us assume that there is an adversary $\mathscr{A}_0$ that can distinguish between $G_0$ and $G_1$. We show that this contradicts the DDH Assumption.
	\\Let $\tilde{\mathcal{A}} = a\B, \tilde{\mathcal{B}} = b\B, \tilde{\mathcal{C}} = c\B$ be the DDH challenge where $c=ab$ or $c$ is random in $\Z_q$.
	The distinguisher $\F_0$ runs $\mathscr{A}_0$, simulating the key-generation phase so that $\mathcal{Y}= b\B$. It can do that by rewinding the adversary and changing its decommitment to $u_1\B = \mathcal{Y}-u_2\B-\sigma_{3,1}\B + 2\sigma_{3,2}\B$, making $\mathcal{Y}=\tilde{\mathcal{B}}$. Thanks to the ZKP $\F_0$ extracts the values of $x_2$ (and then of $\w_2$) from the adversary, but does not know $b$ (and therefore not $x_1$ nor $\w_1=b-\w_2$). In this simulation we can also suppose that $\F_0$ knows the secret key associated to $E_1$, its public key of the Paillier cryptosystem (we can do this since we are not making any reduction to the security of the encryption scheme).
	
	At this point $\F_0$ runs the signature generation protocol for a non-semi-correct execution. It runs the protocol normally till the MtA and MtAwc part of the signature protocol. It knows $\gamma_1$, $k_1$, since it runs $P_1$ normally, and $\gamma_2$, since it extracts its value from the adversary. Therefore $\F_0$ knows $k$ such that $\mathcal{R}=k^{-1}\B$. Since we suppose that $\F_0$ knows the secret key associated to $E_1$ it can also know $\mu_{1,2}$ obtained from the $\MtAwc$ protocol on input $\w_2$ and $k_1$. Since it does not know $\w_1$, during the $\MtAwc$ protocol on input $\w_1$ and $k_2$ it sends a random $\mu_{2,1}$ and sets:
	\begin{equation}\label{ECnsc.eq.1}
	 \nu_{2,1}=k_2\w_1-\mu_{2,1}.
	\end{equation}
	So at the end of the $\MtAwc$ we have that:
	\begin{equation}\label{ECnsc.eq.2}
	 \sigma_1 = k_1\w_1 + \mu_{1,2} + \nu_{2,1};
	\end{equation}
	using \cref{ECnsc.eq.1} in \cref{ECnsc.eq.2} we get
	$$\sigma_1 = k_1\w_1 + k_2\w_1 - \mu_{2,1} + \mu_{1,2} = \tilde{k}\w_1 + \mu_{1,2} - \mu_{2,1},$$
	where $\tilde{k} = k_1 + k_2$ and we have that $\tilde{k} \ne k$ since we are in a non-semi-correct execution.
	Remembering that $\w_1 = b - \w_2$ we can substitute again, obtaining
	\begin{equation}
	 \sigma_1 = \tilde{k}b - \tilde{k}\w_2 + \mu_{1,2} - \mu_{2,1},
	\end{equation}
	and $\F_0$ knows every value in the equation except $b$, so let us group these known values and set $\mu_1 = \tilde{k}\w_2 + \mu_{1,2} - \mu_{2,1}$. Thus $\F_0$ can successfully compute:
	\begin{equation}
	 \sigma_1 \B = \tilde{k}b\B+ \mu_1 \B = \tilde{k}\tilde{\mathcal{B}} + \mu_1 \B,
	\end{equation}
	and therefore also:
	\begin{equation}
	 s_1\mathcal{R} = (k_1m+r\sigma_1) k^{-1} \B = (k_1m+r\mu_1)k^{-1}\B+\tilde{k} r k^{-1}\tilde{\mathcal{B}}.
	\end{equation}
	
We now proceed to the last part of the simulation. 
$\F_0$ selects a random $l_1$ and sets $\mathcal{W}_1 = s_1\mathcal{R}_1+l_1\B$. Instead of following the algorithm $\F_0$ does not choose a random $\rho_1$ but sets implicitly $\rho_1 = a$ and sends $\mathcal{A}_1 = a\B = \tilde{\mathcal{A}}$. During the ZKP that it simulates (since it does not know $a$ nor $s_1$) it extracts $s_2, l_2, \rho_2$ from the adversary. Let us define $s = k^{-1}s_2$. We note that:
\begin{align}
\mathcal{W} &= -m\B -r\mathcal{Y} + \mathcal{W}_1 + \mathcal{W}_2\\
&= (l_1 + l_2)\B + s_1\mathcal{R} + (s-m) \B -r \mathcal{Y}\\
&= l \B+ t_1 \B + t_2 \tilde{\mathcal{B}}, 
\end{align}
where $t_1 = k^{-1}(k_1m+r\mu_1) + s - m$ and $t_2 = k^{-1}\tilde{k}r - r$. We note that in a not-semi-correct execution $\tilde{k} \ne k$ and then $t_2 \ne 0$.
Finally $\F_0$ computes $\mathcal{T}_1 = l_1\mathcal{A}$ correctly but for $\mathcal{U}_1$ it outputs $\mathcal{U}_1 = (l+t_1)\tilde{\mathcal{A}} + t_2\tilde{\mathcal{C}}$ and aborts.
\\More explicitly we have that:
\begin{align}
	\mathcal{U}_1 &= (l+t_1)\tilde{\mathcal{A}}+ t_2\tilde{\mathcal{C}}\\
	&= (l+k^{-1}(k_1m+r\mu_1)+s-m)\tilde{\mathcal{A}} + (k^{-1}\tilde{k}r-r)\tilde{\mathcal{C}}\\
	&=\left(l_1+l_2 + k^{-1}(k_1m+r\mu_1)+s-m\right)a\B + (k^{-1}\tilde{k}r - r)c\B
\end{align}
If we have $c=ab$ this equation can be further simplified to:
\begin{align}
	\mathcal{U}_1 
	&= a(l_1+l_2)\B + ak^{-1}(k_1m+r\mu_1+\tilde{k}rb) + a(s-m)\B - abr\B\\
	&= a(l_1+l_2)\B + ak^{-1}s_1\B+a k^{-1}s_2\B - am\B - abr\B\\
 &= a(l_1 + l_2) \B + a k^{-1}(s_1 + s_2) \B - a(br+m)\B,
\end{align}
and we note also that:
\begin{align}
 a\mathcal{W} &=a \mathcal{W}_1 + a \mathcal{W}_2 -am\B -ar\mathcal{Y}\\
 &= as_1 \mathcal{R} + al_1\B + as_2 \mathcal{R} + al_2\B -am\B -arb\B\\
 &= a(l_1 + l_2) \B + a k^{-1}(s_1 + s_2) \B - a(br+m)\B.
\end{align}
Then we can conclude that if $c=ab$ we have $\mathcal{U}_1 = aV = \rho_1 \mathcal{W}$, as in $G_0$, otherwise $\mathcal{U}_1$ is a random group element as in $G_1$. Then if a distinguisher for $G_0$ and $G_1$ exists it can be also used to win a DDH challenge as we described above, so $G_0$ and $G_1$ are indistinguishable.
\\

Now we deal with the indistinguishability of $G_1$ and $G_2$. We recall that the difference between the protocols $G_1$ and $G_2$ is that in $G_2$ we use a random $\tilde{s}_1$ during the last part and then we have a random $\tilde{\mathcal{W}}_1 = {\tilde{s}_1}\mathcal{R} + l_1\B$. Once again we will prove that $G_1$ is indistiguishable from $G_2$ performing a reduction to the DDH assumption. The idea is to show that $\mathcal{W}_1 = s_1\mathcal{R}+l_1\B$ and $\tilde{\mathcal{W}}_1=\tilde{s}_1\mathcal{R}+l_1\B$ are indistinguishable due to the random value $l_1\B$ added to both of them.

Let $\tilde{\mathcal{A}} = (a-d)\B$, $\tilde{\mathcal{B}} = b\B$, $\tilde{\mathcal{C}} = ab\B$ be the DDH challenge where $d = 0$ or $d$ is a random value of $\Z_q$. The simulator proceeds with the regular protocol until step~\ref{ECsgigen}\ref{ECsigcheck_com} Then $\F_0$ broadcasts $\mathcal{W}_1 = s_1\mathcal{R} + \tilde{\mathcal{A}}$ and $\mathcal{A}_1 = \tilde{\mathcal{B}}$. It simulates the ZKP (since it does not know $l_1$ and $\rho_1$) and extracts $s_2,$ $l_2$, $\rho_2$. Then it computes $\mathcal{U}_1$ as a random element and $\mathcal{T}_1 = \tilde{\mathcal{C}}+\rho_2\tilde{\mathcal{A}}=ab\B + \rho_2(a-d)\B$.

When $d=0$ we have $\tilde{\mathcal{A}} = a\B$, so $a = l_1$ , $b = \rho_1$ and we have that $\mathcal{W}_1 = s_1R+l_1\B$ and:
\begin{align}
 \mathcal{T}_1
 = ab\B +\rho_2 a \B
 =l_1 \rho_1 \B + l_1 \rho_2 \B
 =l_1(\rho_1 + \rho_2) \B,
\end{align}
so $\mathcal{T}_1 = l_1\mathcal{A}$ as in $G_1$. Otherwise, when $d \ne 0$ we have that $\tilde{\mathcal{A}}=a\B-d\B$ with a randomly distributed $d$, then this is equivalent to have:
\begin{equation}
 \mathcal{W}_1 =s_1\mathcal{R}+(a-d)\B = \tilde{s}_1\mathcal{R} +a\B,
\end{equation}
with $\tilde{s}_1 = s_1-d k^{-1}$, that is uniformly distributed thanks to $d$, and $\mathcal{T}_1 = l_1 \mathcal{A}$ as in $G_2$. The key idea is that we use the random value $d$ to change the fixed value $s_1$ to a random and unknown $\tilde{s}_1$ during the computation of $\mathcal{W}_1$. Therefore, under the DDH assumption, $G_1$ and $G_2$ are indistinguishable. Then $G_0$ is indistinguishable from $G_2$ as we wanted to prove.
\end{proof}

Now we have to deal with the recovery signature algorithm. Since the core algorithm remains the same we can use the two proofs already explained, we only need to change the setup phase in which the third player recovers its secret material. In this section we still examine the case in which $\A$ controls one between $P_1$ or $P_2$ and $\F$ controls $P_3$. Things are a little bit different if $\A$ controls $P_3$ since it does not perform the enrollment phase (this case is much easier).
\\Trivially if $\A$ asks for a recovery signature between the two honest parties $\F$ can simply query its oracle and output whatever the oracle outputs. So we can limit ourselves to deal with the case where the adversary participates in the signing process.
\\
\\
Without loss of generality we suppose that $\A$ controls $P_2$. The simulation works as follows:
\begin{enumerate}
	\item $P_2$ sends $\mathcal{Y}, \rec_{1,2}, \rec_{1,3}$ to $P_3$.
	\item $P_i$ generates a Paillier public key $(N_i, \Gamma_i)$ and sends it to the other party.
	\item $P_i$ proves in ZK that it knows the matching secret for its public Paillier key.
	\item $P_3$ can not decrypt the values received in step 1, so it simulates the ZKP about $x_3$ and at the same time it can extract the secret value $x_2$ from $P_2$. Note that the inability to decrypt is not a problem since most of the data is useless (the random values sent by $\F$ during the enrolment phase), so it would not have been able to compute $x_3$ nor $\w_3$ anyway.
	However the simulator remembers the correct values of $\rec_{1,2}, \rec_{1,3}$, so if $\A$ does not send the proper ones it can abort the simulation, as it would happen in a real execution since $P_3$ can not recover a secret key shard that matches $\mathcal{Y}$.
	\item $P_2$ computes its $\tilde{x}_2$ and $\tilde{\w}_2$ from its original shards.
	\item $P_2$ and $P_3$ perform the signing algorithm with the above simulation. Also in this case $\F$ does not know its secret key, but we remark that this is fine since it can use the signing oracle.
\end{enumerate}
In the case of $P_3$ being dishonest the simulation is much more easier. During the enrolment phase $\F$ can produce random shards to send to $P_3$ during the recovery signature phase and output directly its original ECDSA challenge. Then with the same algorithm as before it can perform the signing protocol.
\\
\\
Now we are ready to prove \Cref{theorem_security}.
\begin{proof}
	Let $Q<\lambda^c$ be the maximum number of signature queries that the adversary makes. In the real protocol the adversary will output a forgery after $l<Q$ queries, either because it stops submitting queries or because the protocol aborts. In our simulation we try to guess if a simulation will be semi-correct or not choosing a random $i \in [0,Q]$. We have two cases:
	\begin{itemize}
		\item if $i=0$ we assume that all the executions are semi-correct and then we always use the semi-correct algorithm described previously.
		\item if $i \ne 0$ we assume that the first $i-1$ are semi-correct, but the $i^{th}$ is not. In this case we use the semi-correct algorithm for every execution except for the $i^{th}$ one, for which we use the non-semi-correct one, then we abort.
	\end{itemize}
As we previously proved we produce an indistinguishable view for the adversary, then $\A$ will produce a forgery with the same probability as in a real execution. Then the probability of success of our forger $\F$ is $\frac{\e^3}{8Q}$ and it is the product of:
\begin{itemize}
	\item the probability of choosing a good random tape for $\A$, that is at least $\frac{\e}{2}$, a shown in Lemma \ref{goodTape},
	\item the probability of hitting a good public key, that also is at least $\frac{\e}{2}$, as shown in Lemma \ref{lemma32} and Lemma \ref{lemma33},
	\item the probability of guessing the right index $i$, that is $\frac{1}{Q}$,
	\item the probability of $\A$ to successfully produce a forgery on a good random tape, that is $\frac{\e}{2}$ as shown in Equation \ref{defbuono}.
\end{itemize}
Under the security of the ECDSA signature scheme the probability of producing a forgery must be negligible, which implies that $\e$ must negligible too, contradicting the hypothesis that $\A$ has non-negligible probability of forging the scheme.
\end{proof}

\section{Conclusions}\label{conclusions}
Although decentralized signature algorithms have been known for a while, we are aware of only few proposals for algorithms that are able to produce signatures indistinguishable from a standard one.
Moreover, the protocol described in this work is, as far as we know, the first example of a threshold multi-signature allowing the presence of an off-line participant.
Regarding the protocol's specification, the main difference w.r.t.~\cite{gennaro2018fast} lies in the key-generation phase.
Specifically, the idea is to have two active participants to simulate the action of the third one.
This step is possible due to the uniqueness property of polynomial interpolation that gives a bijection between points and coefficients, that combined to the preserved uniform distribution  in $\Z_p$ allows us to ``invert'' the generation of the shares, that are later recovered by the offline party thanks to an asymmetric encryption scheme.
A second divergence is that we have managed to avoid the use of equivocable commitments under the assumption that in some specific steps (see~\Cref{rush}) we can consider the adversary to not be rushing.

The main efficiency bottleneck is in the massive usage of ZKPs, which are necessary to guarantee the security of the signature itself against black-box adversaries, as hinted by the security proof of \Cref{ecdsasecurity}.
Nevertheless, there are implementations that use our protocol to resiliently manage bitcoin wallets~\cite{dinicola_2020}.

In our security analysis we focused on the unforgeability of the signature, however with an offline party (and more so in the application context of crypto-assets management) there is another security aspect worthy of consideration: the resiliency of recovery in the presence of a malicious adversary.
Of course if the offline party is malicious and unwilling to cooperate in fund recovery there is nothing we can do about it, however the security can be strengthened if we consider that one of the online parties may corrupt the recovery material.
In this case a generic CPA asymmetric encryption scheme is not sufficient to prevent malicious behaviour, because we need a verifiable scheme that allows the parties to prove that the recovery material is consistent, just like they prove that they computed the shards correctly. In this way the adversary is unable to corrupt the recovery material and then there is always a pair of players that is able to sign.
Indeed without these protection measures a malicious user could convince an unsuspecting victim to set up a $(2,3)$-threshold wallet together, and sending bogus recovery data the attacker can later on blackmail the victim to sign transactions of their choice otherwise it will refuse to collaborate in future signatures, effectively freezing the funds since the recovery party has been neutralised.

An interesting topic of further analysis surely regards provable public-key encryption schemes, for example we see potential solutions that could exploit the homomorphic properties of Paillier or ElGamal~\cite{elgamal1985public} cryptosystems.

Other future research steps involve the generalisation to $(t,n)$-threshold schemes with more than one offline party, as well as to different standard signatures.
Regarding the latter, there is a variant of our protocol whose signatures are indistinguishable from EdDSA~\cite{longo2020threshold}, and we are working on its security proof.

\paragraph{Acknowledgments}
The core of this work is contained in the first author's MSC thesis that would like to thank his two supervisors, the second and fourth author, and Telsy S.p.A. for their support during the work.
Part of the results presented here have been carried on within the EU-ESF activities, call PON
Ricerca e Innovazione 2014-2020, project Distributed Ledgers for Secure Open Communities. The second and third authors are members of the INdAM Research group GNSAGA. 
\\
We would like to thank Conio s.r.l. and its co-CEO Vincenzo di Nicola for their support. We also thank Gaetano Russo, Federico Mazzone, and Zsolt Levente Kucsv\'an that worked on the implementation and provided valuable feedback.
\\
The authors would like to thank the anonymous referees.

\nocite{*} 
\printbibliography

\appendix

\section{Zero Knowledge Proofs}

\subsection{Schnorr Protocol}\label{Schnorr}
The Schnorr Protocol is a zero-knowledge proof for the discrete logarithm.
\\Let $\mathbb{G}$  be a group of prime order $p$ with generator $g$. Let $h\in \mathbb{G}$ be a random element in $\mathbb{G}$.
The prover $\mathscr{P}$ wants to prove to a verifier $\mathscr{V}$ that it knows the discrete logarithm of $h$, i.e. it knows $x\in \Z_p$ such that $g^x=h$.\\
So the common inputs are $\mathbb{G},g$ and $h$, while the secret input of $\mathscr{P}$ is $x$.\\
The protocol works as follows:
\begin{enumerate}
	\item $\mathscr{P}$ picks $r$ uniformly at random in $\Z_p$ and computes $u=g^r$. It sends $u$ to $\mathscr{V}$.
	\item $\mathscr{V}$ picks $c$ uniformly at random $\in \Z_p$ and sends it to $\mathscr{P}$.
	\item $\mathscr{P}$ computes $z=r+cx$ and sends $z$ to $\mathscr{V}$.
	\item $\mathscr{V}$ computes $g^z$. If $\mathscr{P}$ really knows $x$ it holds that $g^z=uh^c$. If the equality does not hold, the verifier rejects.
\end{enumerate}
A detailed proof about the security of the algorithm can be found in \cite{schnorr1989efficient}.

\subsubsection{Schnorr Protocol Simulation}
We need to simulate the Schnorr protocol in two different ways: first we need to use it to extract the adversary's secret value, then we need to simulate it without knowing our secret value, tricking the opponent.
We can use the Schnorr protocol to extract the value $x$ from the adversary in this way:
    \begin{enumerate}
        \item Follow the standard protocol until the third point, obtaining $z$.
        \item Rewind the adversary to the second point and pick $c'\ne c$.
        \item Follow the remaining part of the protocol, obtaining $z'$.
        \item We can compute $\frac{z-z'}{c-c'}=\frac{(c-c')x}{c-c'}=x$.
    \end{enumerate}
    \begin{proof}[Sketch]
    Since the only extra hypothesis for $c'$ is that $c'\ne c$ we can suppose that $c'$ has uniform distribution as well. Moreover $z$, once the verifier sent $c$ the value of $z$ is fixed, so the rewinding technique does not cause any problem.
    \end{proof}
    
      At the same time we need to be able to simulate the protocol without knowing $x$. The simulation works as follows:
    \begin{enumerate}
        \item Follow the protocol until the second point, obtaining $c$.
        \item Rewind the adversary to the first point. The simulator picks $r$ randomly and computes $u'=g^{-xc+r} = (g^x)^{-c}g^r$. Under the discrete logarithm assumption and since $r,c$ are random element, this is indistinguishable from $g^r$.
        \item The simulator sends $u'$ and the adversary sends $c$ again.
        \item The simulator sends $z=r-cx+cx = r$.
        \item The adversary checks that $g^z=g^r=u'(g^x)^{c}=g^{-x c}g^r g^{x c}.$
    \end{enumerate}
    \begin{proof}[Sketch]
    The tricky point of the simulation is the third point, when we need that the adversary sends the same $c$ it has previously sent, since sending a different $r$ could change the random choice of $c$ . This could be achieved introducing an equivocable commitment scheme, in this way we need only to change the decommitment value after receiving the adversary commitment.
    \end{proof}

\subsection{Integer Factorization Proof}\label{fact}
We now present a well-known ZK proof for the integer factorization problem.
\\Let $k$ be the security parameter and $N = pq$ with neither $p$ nor $q$ small.
Let $A, B, l$ be such that \footnote{$\phi$ is the Euler's totient function that counts the positive integers up to a given integer n that are relatively prime to n, while $\theta$ is the Bachmann-Landau symbol that means that the argument is bounded both above and below.} $l \log B = \theta(k)$, $(N - \phi(N))l B < A < N$.
Let $z_1, \ldots, z_{k} \in \Z^*_N$ be chosen uniformly at random.
The protocol consists in the repetition $l$ times of the following sub-protocol:
\begin{enumerate}
    \item The prover $\mathscr{P}$ picks $r \in \{0,...,A-1\}$ at random and computes $x_i = z_i^r \mod N$ for all $i\in\{1,\ldots,k\}$.
    \item $\mathscr{P}$ sends every $x_i$ to the verifier $\mathscr{V}$.
    \item $\mathscr{V}$ picks a random integer $e \in \{0,...,B-1\}$ and sends it to $\mathscr{P}$.
    \item $\mathscr{P}$ computes $y = r+e(N-\phi(N))$ and sends it to $\mathscr{V}$.
    \item $\mathscr{V}$ checks that $0\le y < A$ and that $z_i^{y-N e} = x_i \mod N$ for all $i$. If it does not hold the protocol fails.
\end{enumerate}
A detailed explanation of the protocol and the security proof, as well as proofs about the soundness and the completeness of it, can be found in~\cite{integer}.

\subsubsection{Integer Factorization Proof}
    We need to simulate the protocol without knowing the factorization of $n$.
    A detailed description and proof of a simulation can be found in~\cite{integer}.
    Alternatively, at the price of an equivocable commitment, we could simplify the simulation as follows:
\begin{enumerate}
    \item Follow the point of the protocol choosing a random $e_r$ and $y_r$ and compute $x_i=z_i^{y-ne}$.
    \item Receive $e$ from the opponent. If $e=e_r$ then the simulation could end, otherwise rewind the opponent and change the pair $e_r, y_r$. Repeat this step until $e=e_r$
    \item The opponent sends $e$ again, the simulator sends $y'$.
\end{enumerate}
\begin{proof}[Sketch]
Clearly $z^{y'-ne}=x_i \mod n$ and $y'<A$ holds by construction.
\\We observe that a good pair $e_r,y_r$ is obtained with probability $\frac{1}{B}$ and so the complexity of all the simulation is $lB$.
\end{proof}

\subsection{Range Proof}\label{range}
We need two protocols to ensure that the share conversion protocol explained in \Cref{mta} will run correctly: one started by the initiator and the other started by the receiver.

For the first one the common inputs are a Paillier public key $(\Gamma, N)$, the ciphertext $c \in \Z_{N^2}$, an RSA modulus $M$ product of two safe primes, and $h_1, h_2 \in \Z^*_{M}$.
The prover knows $m \in \Z_q$ and $r \in\ \Z^*_N$ such that ${c = \Gamma^m r^N \mod N^2}$ where $q$ is the order of the group used during the share conversion protocol previously described.
At the end of the protocol the verifier is convinced that $m \in [-q^3, q^3]$.
\\The protocol works as follows:
\begin{enumerate}
    \item $\mathscr{P}$ picks randomly $\alpha \in \Z_{q^3}$, $\beta \in \Z^*_N$, $\gamma \in \Z_{q^3 M}$, $\rho \in \Z_{qM}$.
    \item $\mathscr{P}$ computes $z=h_1^mh_2^\rho \mod M$, $u= \Gamma^\alpha\beta^N \mod N^2$, and \linebreak[4]${\omega =h_1^\alpha h_2^\gamma \mod M}$.
    \item $\mathscr{P}$ sends $z,u$ and $\omega$ to $\mathscr{V}$.
    \item $\mathscr{V}$ picks $e$ at random and sends it to $\mathscr{P}$.
    \item $\mathscr{P}$ computes $s=r^e\beta \mod N$, $s_1= em+\alpha$ and $s_2=e\rho + \gamma$.
    \item $\mathscr{P}$ sends $s,s_1$ and $s_2$ to $\mathscr{V}$.
    \item $\mathscr{V}$ checks if $s_1 \le q^3, u=\Gamma^s_1s^Nc^{-e} \mod N^2$ and $h_1^{s_1} h_2^{s_2} = z^e \omega \mod M.$
\end{enumerate}

In the second protocol the prover wants to show that $|b| \le q^3$ and that it knows $b, \beta'$ such that $g^b=B$ and $\mathtt{c}_B= (b \times_E \mathtt{c}_A) +_E E_A(y)$ (this second part only during the Share Conversion Protocol with check). The common inputs are $B$, the Pallier public key $(\Gamma,N)$, and the ciphertexts $c_1,c_2$ (that are the Paillier ciphertexts $\mathtt{c}_A, \mathtt{c}_B$). $\mathscr{P}$ also knows $b \in \Z_q, y \in \Z_N$ and $c_2=c_1^b \Gamma^y r^N \mod N^2$. The protocol works as follows:
\begin{enumerate}
    \item $\mathscr{P}$ picks $\alpha \in \Z_{q^3}$, $\rho, \sigma, \tau \in \Z_{qM}$, $\rho' \in \Z_{q^3M}$, $\beta, \gamma \in \Z^*_N$ uniformly at random.
    \item $\mathscr{P}$ computes:
    \begin{itemize}
    \item $z=h_1^bh_2^\rho \mod M,$
    \item $z'=h_1^\alpha h_2^{\rho'} \mod M,$
    \item $t=h_1^y h_2^\sigma \mod M,$
    \item $u=g^\alpha,$
    \item $v=c_1^\alpha \Gamma^\gamma \beta^N \mod N^2,$
    \item $\omega = h_1^\gamma h_2^\tau \mod M$.
     \end{itemize}
    \item $\mathscr{P}$ sends $z,z',t, v, \omega, u$ to $\mathscr{V}$.
    \item $\mathscr{V}$ picks $e \in \Z_q$ uniformly at random and sends it to $\mathscr{P}$.
    \item $\mathscr{P}$ computes:
    \begin{itemize}
    \item $s=r^e\beta \mod N$,
    \item $s_1=eb+\alpha,$
    \item $s_2=e\rho + \rho',$
    \item $t_1=ey+\gamma$, 
    \item $t_2=e\sigma+\tau.$
    \end{itemize}
    \item $\mathscr{P}$ sends $s,s_1, s_2, t_1$ and $t_2$ to $\mathscr{V}$.
    \pagebreak[3]
    \item $\mathscr{V}$ checks if:
    \begin{itemize}
    \item $s_1\le q^3,$
    \item $g^{s_1}=B^eu,$
    \item $h_1^{s_1}h_2^{s_2} = z^ez' \mod M,$
    \item $h_1^{t_1}h_2^{t_2} = \omega t^e \mod M,$
    \item $c_1^{s_1}s^N\Gamma^{t_1}=c_2^ev \mod N^2.$
    \end{itemize}
\end{enumerate}
For a security proof of this protocol see \cite{gennaro2018fast}.

\end{document}